\documentclass[9pt]{article}
\usepackage{algorithmic}
\usepackage{algorithm}
\usepackage{graphicx,url,cite,amsmath,amsthm,amssymb,amsfonts,mathrsfs, amsthm}

\usepackage{microtype}%if unwanted, comment out or use option "draft"
\usepackage{xargs}
\usepackage{fancyref}
\usepackage{graphicx}

\usepackage{subfig}
\usepackage{subfloat}
\usepackage{float}
\usepackage{sidecap}
\usepackage{lineno}

\pdfpagewidth 8.5in
\pdfpageheight 11in
\setlength\topmargin{0in}
\setlength\headheight{0in}
\setlength\headsep{0.0in}
\setlength\textheight{9.0in}
\setlength\textwidth{6.5in}
\setlength\oddsidemargin{0in}
\setlength\evensidemargin{0in}
%\newcommandx*\eqnlimit[3][1=1, 3=n]{\ensuremath{#1 \leq #2 \leq #3}}
%\newcommand{\linprog}[6]{
%    \begin{alignat}{5}
%          \min        & \quad #1 & \nonumber  \\
%          \label{#6-1}
%          \text{s.t.} & \quad #2 &, & \quad #3\\
%          %\label{#6-2}
%                      & \quad #4 &, & \quad #5 \nonumber
%    \end{alignat}}

\newcommand{\SC}{\ensuremath{\textsc{{Set Cover}}}}

\newcommand{\pairS}{\textsc{Pairwise Selection}}
\newcommand{\MinRep}{\textsc{Min Rep}}
\newcommand{\db}{\text{double-sector}}
\newcommand{\dw}{\text{double-wedge}}
\newcommand{\aec}{(\alpha-\epsilon)}

\newcommand{\aeec}{(\alpha-2\epsilon)}
\newcommand{\an}{((1-1/\delta)\cdot \alpha)}

\newcommand{\Pro}{\alpha \ensuremath{\mathsf{-Ang}}}
\newcommand{\Pror}{(\alpha,R)\ensuremath{\mathsf{-AngDist}}}
\newcommand{\Prol}{\alpha \ensuremath{\mathsf{-ArtAng}}}
\newcommand{\bigo}[1]{\mathcal{O}(#1)}

\newcommand{\OPT}{S_{\textsf{OPT}}}
\newcommand{\OPTsize}{k_{\textsf{OPT}}}
\newcommand{\ft}{\ensuremath{\textsc{Fault tolerant }k\textsc{-suppliers}}}

%\newcommand*\patchAmsMathEnvironmentForLineno[1]{%
%  \expandafter\let\csname old#1\expandafter\endcsname\csname #1\endcsname
%  \expandafter\let\csname oldend#1\expandafter\endcsname\csname end#1\endcsname
%  \renewenvironment{#1}%
%     {\linenomath\csname old#1\endcsname}%
%     {\csname oldend#1\endcsname\endlinenomath}}% 
%\newcommand*\patchBothAmsMathEnvironmentsForLineno[1]{%
%  \patchAmsMathEnvironmentForLineno{#1}%
%  \patchAmsMathEnvironmentForLineno{#1*}}%
%\AtBeginDocument{%
%\patchBothAmsMathEnvironmentsForLineno{equation}%
%\patchBothAmsMathEnvironmentsForLineno{align}%
%\patchBothAmsMathEnvironmentsForLineno{flalign}%
%\patchBothAmsMathEnvironmentsForLineno{alignat}%
%\patchBothAmsMathEnvironmentsForLineno{gather}%
%\patchBothAmsMathEnvironmentsForLineno{multline}%
%}

%\renewcommand\qedsymbol{\textcolor{darkgray}{$\blacktriangleleft$}}

\newtheorem{theorem}{Theorem}[section]
\newtheorem{lemma}[theorem]{Lemma}

\begin{document}

\title{Minimizing Uncertainty through Sensor Placement with Angle Constraints}

\author{
Ioana O. Bercea \thanks{Department of Computer Science, University of Maryland, USA} 
\and Volkan Isler \thanks{Department of Computer Science and Engineering, University of Minnesota, USA}
\and Samir Khuller \thanks{Department of Computer Science, University of Maryland, USA} 
}

\maketitle
\begin{abstract}
We study the problem of sensor placement in environments in which localization is a necessity, such as ad-hoc wireless sensor networks that allow the placement of a few anchors that know their location or sensor arrays that are tracking a target. In most of these situations, the quality of localization depends on the relative angle between the target and the pair of sensors observing it. In this paper, we consider placing a small number of sensors which ensure good angular $\alpha$-coverage: given $\alpha$ in $[0,\pi/2]$, for each target location $t$, there must be at least two sensors $s_1$ and $s_2$ such that the $\angle(s_1 t s_2)$ is in the interval $[\alpha, \pi-\alpha]$. One of the main difficulties encountered in such problems is that since the constraints depend on at least two sensors, building a solution must account for the inherent dependency between selected sensors, a feature that generic $\SC$ techniques do not account for. We introduce a general framework that guarantees an angular coverage that is arbitrarily close to $\alpha$ for any $\alpha <= \pi/3$ and apply it to a variety of problems to get bi-criteria approximations. When the angular coverage is required to be at least a constant fraction of $\alpha$, we obtain results that are strictly better than what standard geometric $\SC$ methods give. When the angular coverage is required to be at least $(1-1/\delta)\cdot\alpha$, we obtain a $\bigo{\log \delta}$- approximation for sensor placement with $\alpha$-coverage on the plane. In the presence of additional distance or visibility constraints, the framework gives a $\bigo{\log\delta\cdot\log\OPTsize}$-approximation, where $\OPTsize$ is the size of the optimal solution. We also use our framework to give a $\bigo{\log \delta}$-approximation that ensures $(1-1/\delta)\cdot \alpha$-coverage and covers every target within distance $3R$. 
\end{abstract}

\newpage

\section{Introduction}
% no \IEEEPARstart
Localization is an important necessity in many mobile computing applications. In ad-hoc wireless sensor networks, it centers around the ability of nodes to self localize using little to no absolute spatial information. When a large number of sensors are deployed, it becomes impractical to equip all of them with the capability  of localizing themselves with respect to a global system (such as through GPS). On the other hand, while GPS can be used for localization in outdoor settings, localization in indoor environments remains a challenge~\cite{sheng2015collocation}. In parallel, as ``smart" home, warehouse and factory automation applications gain traction, providing location services in such settings is becoming more important. When mobility is considered, the problem becomes that of tracking a moving target through a sensor network in which a set of sensors must combine measurements in order to detect the location of the target.

From this perspective, a commonly used technique for localization is to employ a small number anchors (or beacons) that know their location and are capable of transmitting it to the other nodes seeking to localize themselves~\cite{wang2005self}. Alternatively, sensors such as cameras or microphone arrays placed in the environment can collect measurements which can then be used to estimate the locations of objects of interest~\cite{zhang2008maximum,kamthe2009scopes}. In both approaches, some of the most popular measurements used are Euclidean distance and angle between pairs of nodes (bearing), such as is the case with Received Signal Strength Indicator (RSSI), time-to-arrival (ToA) or Angle-of-Arrival (AoA). In this context, each target seeking to localize itself has access to Euclidean distances and/or angular measurements relative to the sensors that are in its vicinity. When exact distances or bearings from two sensors to a target are known, localization can be easily performed through the process of triangulation. In practice, however, the inherent sensor measurements are noisy. To account for this, several models of uncertainty have been proposed, such as assuming probability distributions on the uncertainty in prediction over a set of locations to be measured~\cite{cressie2015statistics} or computing Cramer-Rao bounds as lower bounds for the accuracy of calibration~\cite{moses2003self,patwari2005locating,savvides2003error, 1516106}. 

From a geometric perspective, the target/sensor geometry plays a significant role in the quality of location estimates. In this context, another common benchmark for localization performance is the Geometric Dilution of Precision (GDOP) that investigates how the geometry between the sensors and the target nodes amplifies measurement errors and affects the localization error. Savvides et al.~\cite{savvides2003error, 1516106} observe that the error is largest when the angle $\theta$ between two sensors and the target node is either very small or close to $\pi$. The analysis of Kelly~\cite{kelly2003precision} further shows that, when triangulation is used, this angle contributes to the GDOP at a fundamental level. When distance measurements are used in triangulation, the GDOP is proportional to $1/|\sin \theta|$. When angular measurements are used, the GDOP is proportional to $d_1\cdot d_2/|\sin \theta|$, where $d_1$ and $d_2$ are the distances from the anchors to the node. In general, distance information comes from connectivity of the communication graph and depends on the sensing range of the sensor. As such, it can be modeled as a disk or annulus. Bearing information is subject to additive error and can be modeled as a cone. The above measurements become constraints on the possible location of the target, each restricting the set of feasible locations. Intuitively, the quality of localization depends on both the area and the shape of the intersection of the feasible regions defined by the measurements. When the angle $\theta$ is close to $0$ or $\pi$, the intersection becomes unconstrained and the error unbounded. In particular, when the sensors and the target are collinear, localization is impossible. In essence, overall uncertainty is minimized when the sensors are well separated angularly about the target. 
%\begin{figure}
%  \centering
%  \caption{When bearing information is used to determine a target's location, the measurement corresponds to a cone centered at the true bearing from the sensor to the target. Given sensors $s_1,s_2$ and target $t$, let $\theta = \angle{s_1 t s_2}$, $d_1 = d(s_1,t)$ and $d_2 = d(s_2,t)$. When the sensors are too far from the target, the uncertainty becomes unbounded. If $\theta$ is small, the $y$-direction becomes unconstrained. If $\theta$ is large, the $x$-direction becomes unconstrained. }
%  \includegraphics[height=1.70in]{figs/uncertainty.pdf}
%  \label{fig:unc}
%\end{figure}

Inspired by these observations, our paper focuses on the geometry of sensor deployment and asks the question of where should the sensors be placed such that we control the inherent uncertainty in measurement at the GDOP level. Specifically, given a set of candidate sensor locations and a set of possible target locations, what is the minimum number and placement of sensors so as to ensure that the uncertainty in estimating the target's location is below a given threshold for all possible target locations? The question then becomes one of coverage and, to this end, we define an angular constraint which we call $\alpha$-\textit{coverage}: given a parameter $\alpha \in [0,\pi/2]$, each target at position $x$ must be assigned two sensors (at positions $s_1$ and $s_2$) such that the angle $\theta = \angle s_1xs_2$ is in the range $[\alpha, \pi-\alpha]$ (i.e. neither too small nor too big). Bounding the angle in this fashion allows us to upper bound its influence on the GDOP. We then frame the problem of sensor placement as a bicriteria optimization problem:  given a set of possible sensor and target locations, we wish to select the smallest number of sensors that provide $\alpha$-coverage for all target locations. This formulation captures the scenario in which we have discretized the space into finitely many locations. We study this problem for a number of settings in conjunction with other classical constraints such as sensing range and line-of-sight visibility. We address these variants from a theoretical perspective and present a general algorithmic framework that specifically addresses the angular constraint and iteratively obtains better angular guarantees at the expense of larger solution sizes. 

\subsection{Our model.} Formally, we consider the 2D model in which the set of candidate sensor locations is a discrete set $X\subseteq \mathbb{R}^2$ and the set of target locations is a discrete set $T \subseteq \mathbb{R}^2$. Let $m = |X|$ be the number of possible sensor locations and $n=|T|$ be the number of target locations.  The underlying distance function will be the Euclidean $\ell_2$ metric. We consider (unordered) pairs of the form $(s,s')$ where $s \neq s'$, $s \in S$, $s' \in S'$ and $S,S' \subseteq X$ are sets of sensors. We denote the set of such pairs as $S \times S'$. Formally, $S \times S'  = \{ (s,s') | s \neq s', s\in S, s' \in S'\}$. We say that the pair $(s,s')$ \textit{$\alpha$-covers} the target $t$ if the angle $\angle{sts'} \in [\alpha,\pi - \alpha]$. We also say that the set of pairs $S \times S'$ $\alpha$-covers $t$ if there exists a pair $(s,s') \in S \times S', s \neq s'$,  that $\alpha$-covers $t$. When $S' = S$, we simply say that the set $S$ $\alpha$-covers $t$. We say that a pair or a set of pairs $\alpha$-covers a set $T$ of targets when it $\alpha$-covers each element of $T$. In addition, we say that a pair or a set of pairs $(\alpha,R)$-covers a set $T$ of targets \textit{within distance $R$} if, for at least one of the pairs that $\alpha$-covers a target, the distance from both sensors to the target is $\leq R$. Notice that the parameter $\alpha$ defines a certain range: the higher the value of $\alpha$, the smaller the range of possible values that $\angle{sts'}$ can take. 

The \textbf{Minimum Sensor Placement with $\alpha$-coverage} ($\Pro$) problem asks for computing the smallest set $S$ that $\alpha$-covers $T$. When the sensors have \textit{finite sensing range} $R$, the \textbf{Minimum Sensor Placement with $(\alpha,R)$-coverage} ($\Pror$) asks for the smallest set $S$ that $\alpha$-covers $T$ within distance $R$. We also consider the problem that combines $\alpha$-coverage with visibility constraints, as introduced by Efrat et al~\cite{JM}. Given two regions $Q \subseteq P$, the goal is to place a small set of sensors in $P$ that $\alpha$-cover and guard targets in $Q$. In order to obtain a discrete set of sensors, the authors in \cite{JM} impose an arbitrary grid $\Gamma$ on $P$ and consider potential sensor locations situated at the vertices of $\Gamma$. We note however, that such a step is not necessary in our case, since $X$ is already a discrete set. Given a sensor $s \in X$ and a target $t\in T$, we say that $s$ \textit{sees} $t$ if the segment connecting the two does not cross the boundary of $P$ (i.e. $t$ is within line-of-sight of $s$). Two sensors $s_1,s_2$ \textit{$\alpha$-guard} a target $t$ if they both see the target and $(s_1,s_2)$ $\alpha$-covers $t$. We say that a set $S \subseteq X$ $\alpha$-guards $T$ if, for each target $q \in T$, there exist sensors $s_1, s_2 \in X$ such that $(s_1,s_2)$ $\alpha$-guards $q$. We note that this problem is closely related to the \textsc{Art Gallery} problem which asks for a set of sensors in $P$ such that every point in $Q$ is visible from at least one sensor. The \textbf{Art Gallery with $\alpha$-coverage} ($\Prol$) problem therefore asks for the set $S\subseteq X$ of smallest cardinality that $\alpha$-guards $T$.

\subsection{Related work.} When it comes to sensor coverage problems, extensive work has been done although surprisingly few results discuss $\alpha$-coverage. Notable exceptions are the work of Efrat, Har-Peled and Mitchell~\cite{JM}, Tekdas and Isler~\cite{tekdas10placement} and Isler, Khanna, Spletzer and Taylor~\cite{isler05cviu}. As mentioned before, Efrat et al.~\cite{JM} introduce the $\Prol$ problem in which two sensors are required to $\alpha$-guard a target. They present a $\bigo{\log \OPTsize}$-approximation algorithm that guarantees $\alpha/2$-coverage, where $\OPTsize$ is the smallest set of sensors that satisfy the visibility and $\alpha$-coverage constraints. Their algorithm runs in time $\bigo{n\OPTsize^4\log^2 n \log m}$. In contrast, we present a framework that achieves $(1-1/\delta)\cdot \alpha$-coverage. Tekdas and Isler~\cite{tekdas10placement} formalize the angle constraint by requiring that the uncertainty $d_1 \cdot d_2/|\sin\theta|$ from before be smaller than a certain threshold $U$. When the targets are contained in some subset of the plane and the sensors can be placed anywhere (continuous case), they present a $3$-approximation with maximum uncertainty $\leq 5.5U$. We note that $\Pror$ is a generalization of the above problem in the sense in which an algorithm for $\Pror$ can be used in approximately solving the former. When the sensor locations can be chosen from a uniform grid or if they are unconstrained, we can obtain improved constant-factor approximations in the manner of Tekdas et al.~\cite{tekdas10placement}. In particular for their problem definition, we can bypass their bicriteria approximation by using a square grid and placing at most $24 \cdot \OPTsize$ sensors on its vertices while never violating the uncertainty threshold $U$. A similar approach can be used for $\Pro$ and $\Pror$. Finally, Isler et al~\cite{isler05cviu} consider the case in which the sensor locations are already given and one must compute an assignment of sensors to targets that minimizes the total sum of errors. In addition, they require that each sensor be used in tracking only one target. The version relevant to our problem is when the error is defined as $1/\sin\theta$. In the case in which the sensors are equally spaced on a circle, they present a $1.42$-approximation that also applies to minimizing the maximum error.

\begin{table}
\begin{tabular}{|r|l|}
  \hline
  Coverage Problem & Results\\
  \hline \hline
  $\Pro$ & $\bigo{\log \delta}$-approximation with $(1-1/\delta)\cdot
   \alpha$-coverage\\
  \hline 
  $\Pror$ & 
   within distance $R$:\\
   & 
    $\bigo{\log \delta \cdot \log \OPTsize}$-approximation with $(1-1/\delta)\cdot
       \alpha$-coverage \\ \cline{2-2}
   & within distance $3R$: \\
   &  $\bigo{\log \delta}$-approximation with $(1-1/\delta)\cdot
        \alpha$-coverage\\ \cline{2-2}
   & for $\alpha=0$:\\
   & optimal set of sensors that cover everything within distance $(1+\sqrt{3})\cdot R$\\
   & Euclidean $\ensuremath{\textsc{Fault Tolerant }k\textsc{-suppliers}}$:\\
   & Known $3$-approximation~\cite{Khuller}, new $(1+\sqrt{3})$-approximation\\
        
  \hline
  $\Prol$ & Known: \\
  & $\bigo{\log \OPTsize}$-approximation with $\alpha/2$-coverage\cite{JM} \\ \cline{2-2}
  & New: \\
  
 & $\bigo{\log \delta \cdot \log \OPTsize}$-approximation with $(1-1/\delta)\cdot
         \alpha$-coverage \\
 & $\bigo{\log \delta \cdot \log h \cdot \log (\OPTsize \log h)}$-approximation for polygons with $h$ holes       \\ 
  \hline  
\end{tabular}
\caption{Summary of our results. Depending on each problem formulation, $\OPTsize$ denotes the size of the optimal set of sensors. The results hold for any $\delta>1$ and $\alpha \leq \pi/3$.}
\label{tab}
\end{table}

\subsection{Our contributions and methods.} Our contributions are twofold. For the case of $\alpha\leq \pi/3$, we provide a general bi-criteria framework that approximates the angular coverage to arbitrary precision while guaranteeing a good approximation in the size of the solution. Specifically, for any $\delta>1$, we propose an iterative method that guarantees $(1-1/\delta)\cdot \alpha$-coverage and contributes a factor of $\log \delta$ to the approximation factor of the solution size (Section ~\ref{co}). We exemplify the use of this framework in the context of three coverage problems: one in which we just consider angular constraints ($\Pro$), one in which we have additional distance constraints ($\Pror$) and one in which we have visibility constraints ($\Prol$). A summary of our results can be found in Table ~\ref{tab}.

In addition, we present further approximations for the special case in which we have angular and distance constraints (Section ~\ref{ex}). We relax the distance constraints from $R$ to $3R$ and reduce the approximation factor of the solution size from $\bigo{\log \delta \cdot \log \OPTsize}$ to $\bigo{\log \delta}$, while keeping the angular coverage at $ (1-1/\delta)\cdot \alpha$. As shall be explained in Section~\ref{co}, our main framework for dealing with angular constraints requires us to solve the \textsc{Hitting Set} problem for several geometric objects. Instrumental in this approach are the results of Haussler and Welzl~\cite{HausslerW86} and  Br\"{o}nnimann and Goodrich~\cite{BG} which are based on the existence of good $\epsilon$-nets for a given set system. We defer the intricacies of $\epsilon$-nets to later in the paper but mention now that, given an algorithm that computes in polynomial time an $\epsilon$-net of size $\bigo{(1/\epsilon)\cdot g(1/\epsilon)}$, the algorithm of Br\"{o}nnimann and Goodrich~\cite{BG} returns a hitting set of size $\bigo{\tau \cdot g(\tau)}$, where $\tau$ is the size of the optimal hitting set and $g$ is a monotonically increasing sublinear function. In general, our framework uses $\epsilon$-net constructions to directly obtain good hitting sets for our problem. In the special case of the relaxed distance constraint, however, we employ a two step method which we believe is of independent interest and may have other applications. First, we construct an $\epsilon$-net of size $\bigo{\frac{R_I}{R} \cdot \frac{1}{\epsilon}}$, where $R_I$ is the diameter of the space of all possible sensor locations. We then apply the shifting technique of Hochbaum and Maass~\cite{Hochbaum:1985:ASC:2455.214106} to eliminate the dependency on $\frac{R_I}{R}$ and obtain a constant factor approximation for the corresponding \textsc{Hitting Set} problem in which sensors cover targets within distance at most $3R$.

We also consider the case in which $\alpha=0$ and construct a set of optimal size that covers the targets within distance $(1+\sqrt{3})\cdot R$. This particular case remains relevant since it captures the spirit of fault tolerance by requiring two sensors to be assigned to a target. We achieve our result by showing a $(1+\sqrt{3})$-approximation for the more general Euclidean $\ensuremath{\textsc{Fault Tolerant }k\textsc{-suppliers}}$ problem which improves on the existing $3$-approximation by Khuller et al~\cite{Khuller}.

Before describing our results in more detail, we would like to outline the particular difficulties that the angular constraints give rise to and provide some context for our strategy.  For simplicity, we will consider the $\Pro$ problem, noting that all our observations also apply to $\Pror$ and $\Prol$, since they share the angular coverage constraint. One natural way in which we can consider $\Pro$ is as an instance of $\SC$. Let $\OPTsize$ be the size of the optimal solution for $\Pro$ and $k^*$ the corresponding one for $\SC$. For each pair of sensors $(s,s')$, we can define $S_{(s,s')}$ to be the set of targets $t \in T$ such that $(s,s')$ $\alpha$-covers them. $\SC$ asks for the smallest collection of sets whose union covers $T$.  The generic greedy algorithm for $\SC$ constructs a solution of size at most $k^* \cdot \log n$. Since every newly covered target might require a distinct pair of sensors, this leads to an upper bound of ${\OPTsize \choose 2}$ on $k^*$. This approach therefore yields a $\bigo{\OPTsize\cdot\log n}$- approximation for $\Pro$. By exploiting the underlying geometry of the problem, we can improve the above approximation factor to $\bigo{\OPTsize \log \OPTsize}$. Specifically, one can show that each $S_{(s,s')}$ is induced by the symmetric difference of two circles, as shown in Figure \ref{fig:two}.
\begin{SCfigure}
  \centering
  \caption{Any two distinct sensors $s\neq s'$ uniquely determine, for a given $\alpha \in (0,\pi/2]$, two disks $\mathcal{D}_1=\mathcal{D}(O_1, R_{\alpha})$ and $\mathcal{D}_2=\mathcal{D}(O_2, R_{\alpha})$ of similar radius that have $s$ and $s'$ on their boundary. The set of targets that are $\alpha$-covered by the $(s,s')$ is exactly $(\mathcal{D}_1 \cup \mathcal{D}_2) \setminus (\mathcal{D}_1 \cap \mathcal{D}_2) $. The radius of the disks  is $R_{\alpha} = \frac{d(s,s')}{2} \cdot \frac{1}{\sin\alpha}$.}
  \includegraphics[height=1.75in]{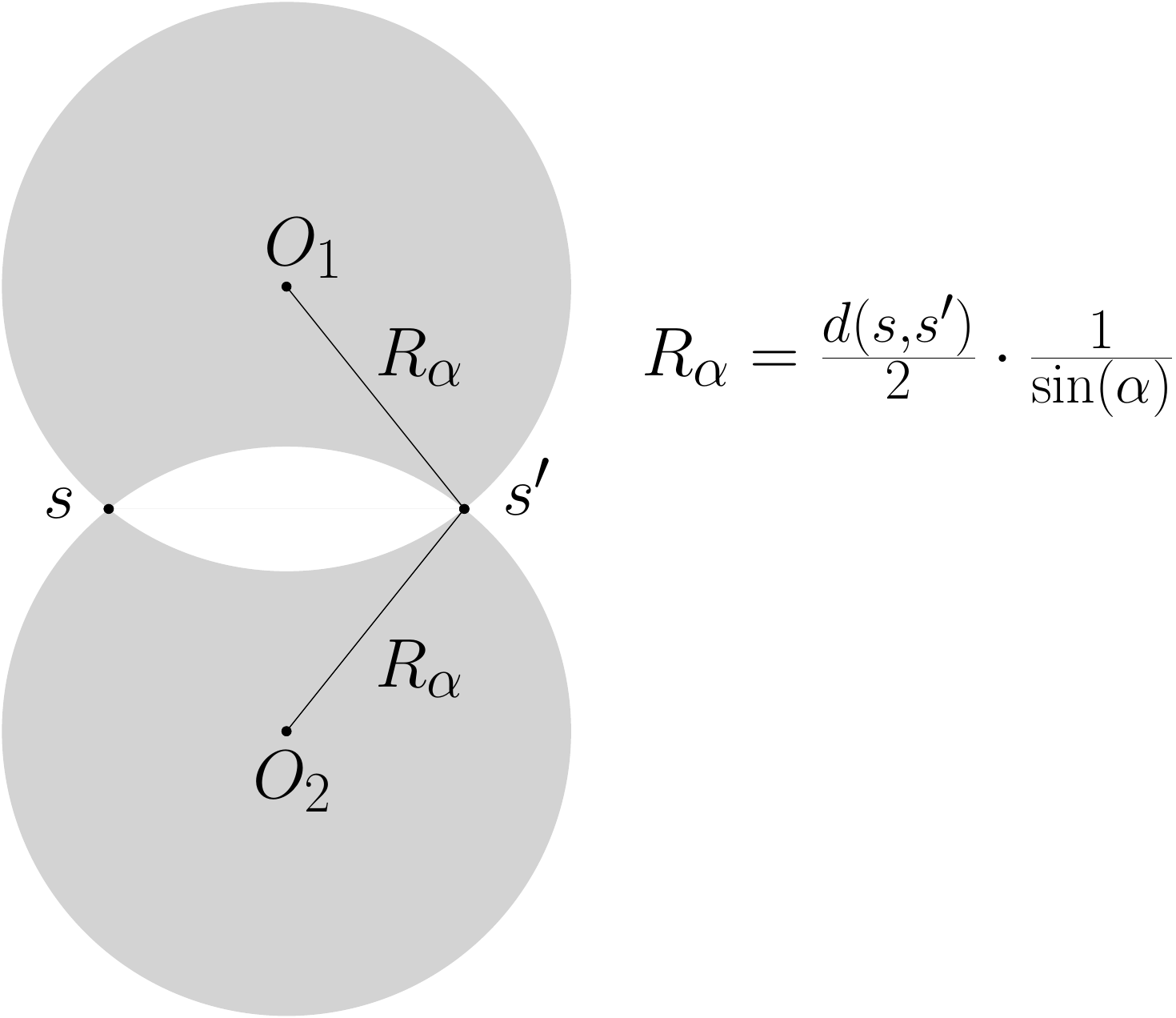}
  \label{fig:two}
\end{SCfigure}
 As such, these objects have constant Vapnik-Chervonenkis (VC) dimension~\cite{Matousekbook}. Given a set system $\mathcal{F}(X, \mathcal{R})$ where $\mathcal{R}$ is a collection of subsets of the universe $X$, the VC dimension of $\mathcal{F}(X, \mathcal{R})$ is the size of the largest subset $Y \subseteq X$ that can be \textit{shattered} in the sense that $|\{ Y \cap S | S \in \mathcal{R} \}| = 2^{|Y|}$. Given a set system with VC dimension $d$, the results of Haussler and Welzl~\cite{HausslerW86} and Br\"{o}nnimann and Goodrich~\cite{BG} imply a method of constructing a set cover of size at most $\bigo{d \log(d \cdot k^*) \cdot k^*}$. When the dimension is constant, this leads to a $\bigo{\log k^*}$-approximation. Unfortunately, this only gets us a $\bigo{\OPTsize \cdot \log \OPTsize}$-approximation guarantee in the total numbers of sensors chosen in the solution.
 
  The persistent $\OPTsize$ factor in the approximation comes from the fact that the $\SC$ framework cannot distinguish between sensors that help cover a lot of targets (in isolation) and sensors that, additionally, can also help cover more targets in conjunction with other sensors. In other words, it does not make use of the global dependency between sensors in order to get a small solution size. Such observations are more in the vain of \textsc{Label Cover} type problems. In fact, as discussed in Appendix ~\ref{pairwise_selection}, when considered in its full generality (i.e. points lie in arbitrary space and coverage is defined arbitrarily), the problem becomes a generalization of \textsc{MinRep} and, as such, incurs a hardness of approximation bound of $2^{ \log^{1-\epsilon} n}$, for any $0<\epsilon<1$ unless NP $\subseteq \text{DTIME}(n^{\text{polylog}(n)})$~\cite{kortsarz2001hardness}. Better approximations must exploit the fact that the contribution of one sensor is measured in terms of all the choices we make in our final solution and must therefore leverage the potential of already chosen sensors when selecting new ones.

In this context, our bi-criteria framework achieves better approximation bounds on the number of sensors used while approximating $\alpha$-coverage to arbitrary precision, for any $\alpha \leq \pi/3$ (Section ~\ref{co}). In particular, when $\delta$ is constant, we get a constant approximation for $\Pro$ and $\bigo{\log \OPTsize}$-approximations for $\Pror$ and $\Prol$, while approximating the angle coverage by a constant. When $\alpha \leq \pi/3$, this directly improves on the geometric $\SC$ approach in the sense in which we obtain a smaller approximation factor in the number of sensors (while approximating the angular converage) and also on the result by Efrat et al.~\cite{JM} for the \textsc{Art Gallery} with $\alpha$-coverage problem, in the sense in which we obtain better angular guarantees while maintaining the same approximation factor in the number of sensors chosen. Our framework is based on iteratively applying a method that, given a set of already chosen sensors $S$ that achieves $\aeec$-coverage (for any $\epsilon < \alpha/2$), selects another set $S'$ that, together with $S$, guarantees a better $\aec$-coverage (Section ~\ref{al}). At its core, our method exploits the collaboration of sensors in $S$ to construct pairs that $\aec$-cover the targets and in which one sensor is in $S$ and the other one in $S'$. The task of constructing $S'$ is then reduced to that of computing hitting sets for set systems induced by geometric objects that have good approximation algorithms. Furthermore, their size will be bounded in terms of $\OPTsize$. Our method significantly generalizes the one used by Efrat et al~\cite{JM} and can be iteratively applied to obtain better and better angular guarantees at the expense of constructing a larger set each time. It is worthwhile to note that the main technical lemma of the framework refers solely to the angle coverage constraint and as such, could be applied to a variety of other problems as long as the other constraints (such as distance or line-of-sight visibility) define a good set system (one with finite VC dimension, for example).

\section{Algorithmic Framework}
\label{co}

Our strategy will be to build pairs in a more tractable way: instead of looking at all pairs formed by sensors in $\OPT$, we will look at pairs formed by one sensor in $\OPT$ and another fixed sensor that we have already chosen in our solution (Section ~\ref{al}). The first observation in this line of thought was made by Efrat et al.~\cite{JM}: given an arbitrary set $S \subseteq X$, for every target $t \in T$, there exist sensors $s \in S$ and $s^* \in \OPT$ such that $(s,s^*) $ $\alpha/2$-covers $t$. In other words, $S \times \OPT$ $\alpha/2$-covers $T$. We generalize this observation but require that $\alpha \leq \pi/3$ : 

\begin{lemma} \label{existence}
 Let $\epsilon>0$ be such that $\alpha-\epsilon \leq \pi/3$ and $\epsilon \leq \alpha/2$. Given a set $S$ that $\aeec$- covers $T$, let $T' \subseteq T$ be the set of targets that $S$ does \textbf{not} $\aec$-cover. Then $S \times \OPT$ $\aec$-covers $T'$.
\end{lemma}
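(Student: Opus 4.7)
The plan is to change coordinates to an angular metric on the projective circle of directions through the fixed target $t$. For each candidate sensor $s$, identify $s$ with its direction from $t$ and set $\omega(s, s') := \min(\angle s t s', \pi - \angle s t s') \in [0, \pi/2]$. Then $\alpha'$-coverage of $t$ by a pair $(s, s')$ is exactly $\omega(s, s') \geq \alpha'$, and the hypotheses translate cleanly: (i) there exist $s_1, s_2 \in S$ with $\omega(s_1, s_2) \in [\alpha - 2\epsilon, \alpha - \epsilon)$, the upper bound being strict since $t \in T'$ is not $(\alpha - \epsilon)$-covered by $S$; (ii) there exist $s_1^*, s_2^* \in \OPT$ with $\omega(s_1^*, s_2^*) \geq \alpha$, since $\OPT$ $\alpha$-covers $t$.

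I would then argue by contradiction: assume $\omega(s, s^*) < \alpha - \epsilon$ for every $s \in S$ and every $s^* \in \OPT$, and specialize to the four pairs $s \in \{s_1, s_2\}$, $s^* \in \{s_1^*, s_2^*\}$. This places both $s_1^*$ and $s_2^*$ inside the intersection of the two open $\omega$-balls of radius $\alpha - \epsilon$ centered at $s_1$ and $s_2$. The central geometric claim is that on a circle of circumference $\pi$, two such balls whose centers are at $\omega$-distance $D := \omega(s_1, s_2)$ intersect in a single open arc of length $2(\alpha - \epsilon) - D$. Establishing this requires ruling out a far-side wrap-around overlap, and this is precisely where the hypothesis $\alpha - \epsilon \leq \pi/3$ enters: the combined angular extent satisfies $D + 2(\alpha - \epsilon) < 3(\alpha - \epsilon) \leq \pi$, so the two arcs fit inside an arc of length strictly less than $\pi$ and the opposite side of the circle remains untouched.

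Once the intersection is identified as a single open arc of length $L = 2(\alpha - \epsilon) - D$, the rest is a two-line calculation. Since both $s_1^*$ and $s_2^*$ lie in an open arc of length $L$, I get $\omega(s_1^*, s_2^*) < L$ strictly, whereas (ii) gives $\omega(s_1^*, s_2^*) \geq \alpha$. Combining yields $\alpha < 2(\alpha - \epsilon) - D$, i.e., $D < \alpha - 2\epsilon$, directly contradicting (i). Hence at least one of the four pairs $(s_i, s_j^*) \in S \times \OPT$ must satisfy $\omega(s_i, s_j^*) \geq \alpha - \epsilon$, which is exactly $(\alpha - \epsilon)$-coverage of $t$.

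The main obstacle is the careful verification that on the projective circle the two $\omega$-balls meet in a single arc rather than two; the entire argument hinges on this. The $\pi/3$ restriction is sharp here: if $\alpha - \epsilon$ were allowed to exceed $\pi/3$, a second intersection component could appear on the far side, and one could place $s_1^*$ and $s_2^*$ in opposite components so that they remain $\omega$-far while still sitting inside both balls, defeating the contradiction. Provided $D + 2(\alpha - \epsilon) < \pi$ is verified cleanly, everything else is an immediate circle-arithmetic consequence.
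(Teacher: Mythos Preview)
Your argument is correct and is, at bottom, the same proof as the paper's, recast in a cleaner coordinate system. The paper works directly with double-wedges in the plane: it shows that the union $D_t=R_t(s_1,\alpha-\epsilon)\cup R_t(s_2,\alpha-\epsilon)$ is again a double-wedge, of central angle $\pi-2(\alpha-\epsilon)+\alpha'$ (this is where $\alpha-\epsilon\le\pi/3$ enters, to make the two double-wedges overlap rather than separate), and then argues that the complementary double-wedge $D_t'$ has central angle $2(\alpha-\epsilon)-\alpha'\le\alpha$, so it cannot contain both $s_1^*,s_2^*$ without violating $\angle s_1^*ts_2^*\in[\alpha,\pi-\alpha]$. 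Your passage to the projective circle with the metric $\omega$ is precisely the quotient that turns a double-wedge into an arc; your ``the two $\omega$-balls meet in a single open arc of length $2(\alpha-\epsilon)-D$'' is the complement of the paper's statement about $D_t$, and your bound $\omega(s_1^*,s_2^*)<2(\alpha-\epsilon)-D$ is their $\theta_{D_t'}\le\alpha$. What your formulation buys is uniformity: the two orientation cases $\angle s_1ts_2<\alpha-\epsilon$ and $\angle s_1ts_2>\pi-(\alpha-\epsilon)$ are absorbed into the single quantity $\omega(s_1,s_2)$, whereas the paper fixes one orientation and remarks that the others are analogous; it also makes the role of the $\pi/3$ hypothesis transparent as the no-wraparound condition $D+2(\alpha-\epsilon)<\pi$ on a circle of circumference $\pi$.
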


In other words, if $S$ does not already $\aec$-cover a target, then the sensors in $S$ can be paired up with sensors in $\OPT$ to $\aec$-cover it. When $\epsilon=\alpha/2$, we start with an arbitrary set $S$ and recover the observation of Efrat et al.~\cite{JM} for $\alpha \leq \pi/3$. We note that, in order to get a better that $1/2$-approximation on the angular coverage, the seed set $S$ cannot be chosen arbitrarily. In fact, our proof crucially uses the power of the pairs in $S$ to $\aeec$-cover the targets, thus further exploiting the collaboration between already chosen sensors.

By fixing some of the sensors to be in $S$ and looking at pairs in $S \times \OPT$,  we can reduce the general problem to a more tractable one of finding a suitable set $S'$ that can achieve what $\OPT$ achieves in this restricted framework(Subsection ~\ref{alcon}). In particular, Lemmas \ref{red_a}, \ref{red_ar} and \ref{red_al} construct the set $S'$ by computing an approximate hitting set. This is where the specific geometry of the other constraints comes into play. Once we have a set $S'$ such that $S \times S'$ $\aec$-covers $T'$, we can add $S'$ to $S$ and obtain a larger set that $\aec$-covers $T$. By iteratively applying this algorithm $\log \delta$ times, we obtain the main results of the paper.

\subsection{Existence of a good solution}
We begin by defining, for each target $t \in T$, sensor $s \in X$ and angle parameter $\beta \in [0,\pi/2]$, the set
$R_t(s, \beta) = \{ s' \in X| (s,s') \text{ }\beta\text{-covers } t\}$. In other words, $R_t(s,\beta)$ represents the set of feasible locations for sensors that, together with $s$, $\beta$-cover $t$ ($\beta$ will be instantiated as $\alpha-\epsilon$). Notice that, from the point of view of $t$, the set $R_t(s, \beta)$ is induced by two wedges around $t$, as seen in Fig. \ref{fig0}.  A \textit{wedge} is defined as the intersection of two non-parallel half spaces in $\mathbb{R}^2$.  Specifically, let $l$ be the line that passes through $s$ and $t$. Let $l_1$ and $l_2$ be the two lines that pass through $t$ and form an angle of $\beta$ with $l$. These two lines describe two opposite wedges of interest: one that corresponds to the half spaces above the lines $l_1$ and $l_2$, and one that corresponds to the half spaces below the lines $l_1$ and $l_2$. The union of these two wedges will be referred to as the \textit{$\dw$} around $t$. The central angle $\theta$ of both the wedges is $\theta = \pi - 2\beta$, and by extension we shall say that the corresponding $\dw$ has a central angle of $\theta = \pi - 2\beta$.

\begin{figure}[t]
\centering
\includegraphics[height=2in]{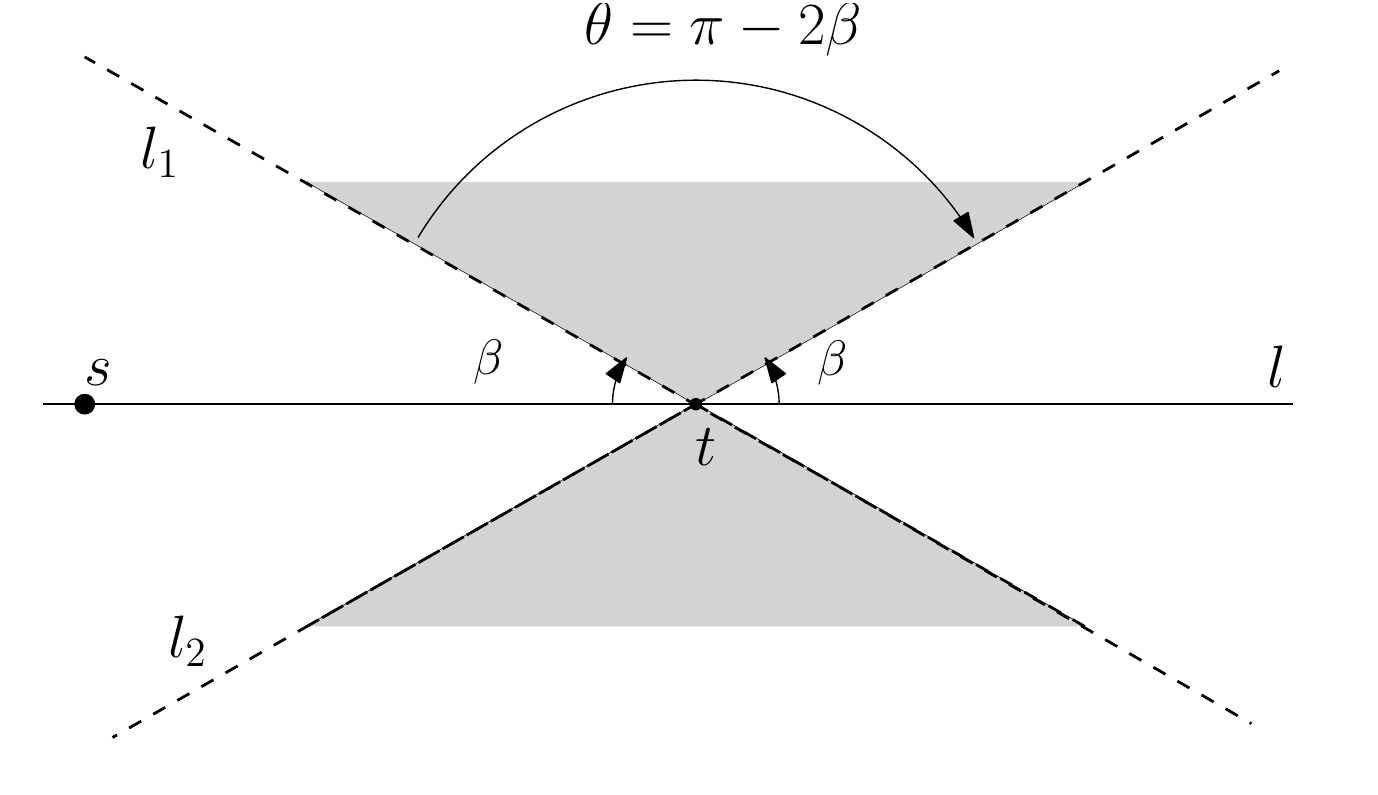}
\caption{The set $R_t(s, \beta)$ is induced by the $\dw$ generated by the lines $l_1$ and $l_2$ and has a central angle $\theta = \pi - 2\beta$. }
\label{fig0}
\end{figure}

In order to prove Lemma \ref{existence}, we essentially show that $\OPT$ must intersect at least one of the $\dw$s generated by $S$ and $T$. First, notice that if a set $S$ already $\aec$-covers a target $t$, then we do not need to worry: $S$ will continue to $\aec-$cover $t$ even when we add $S'$ to $S$. We are therefore concerned with targets in $T'$ that are not already $\aec$-covered by $S$. We note, however, that it is essential that $S$ $\aeec$-covers $T'$. If $S$ does not have this property or is arbitrary, we cannot hope to get past the $\alpha/2$ barrier. 

Fix such a target $t \in T'$ and let $s_1,s_2 \in S$ be any two sensors that $\aeec$-cover $t$ but do not $\aec$-cover it. We will show that there exists $s^* \in \OPT$ such that either $s^* \in R_t(s_1,\alpha-\epsilon)$ or  $s^* \in R_t(s_2,\alpha-\epsilon)$. The candidates will be $s^*_1, s^*_2 \in \OPT$ where $(s^*_1,s^*_2)$ is the optimal pair that $\alpha$-covers $t$. Intuitively, each of the $\dw$s induced by $s_1$ and $s_2$ alone is not big enough to "capture" $s^*_1$ or $s^*_2$. However, if $\angle s_1ts_2$ is in the range $[\alpha-2\epsilon, \pi - (\alpha-2\epsilon)]$, then together, the union $D_t$ of these $\dw$s (which will also be a $\dw$ around $t$) will be sufficiently well spread (i.e. have a large enough central angle ) to guarantee that one of the optimal sensors is contained in it. In other words, at least one of the optimal sensors $s^*_1$ or $s^*_2$ together with either $s_1$ or $s_2$ will $\aec$-cover $t$.

We note, however, that the requirement that $\alpha$ be smaller than $\pi/3$ is relatively tight in this framework, in the sense in which, if $\alpha - \epsilon > \pi/3$, then the central angle of each of the double wedges is too small and we can no longer guarantee that their union $D_t$ forms a bigger double wedge. Furthermore, it is not true that $D_t$ must intersect $\OPT$. 

Consider the $\dw$s $D_1$ and $D_2$ corresponding to $ R_t(s_1,\alpha-\epsilon)$ and $R_t(s_2,\alpha-\epsilon)$, respectively, with central angles $\theta_{D_1} = \theta_{D_2} = \pi-2(\alpha-\epsilon)$. Let $\alpha' = \angle(s_1ts_2)$. 

\begin{lemma}  The union of the two $\dw$s $D_1$ and $D_2$ is a larger $\dw$ $D_t$ centered at $t$ with  central angle $\theta_{D_t} = \pi - 2(\alpha-\epsilon) + \alpha'$.
\end{lemma}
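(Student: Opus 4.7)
The plan is to reduce the statement to a straightforward arc-length computation on the unit circle of directions at $t$. Let $\vec{u_i}$ denote the unit vector from $t$ towards $s_i$ and parametrize the direction circle so that $\vec{u_1}$ sits at angle $0$ and $\vec{u_2}$ at angle $\alpha'$. Because each $D_i$ depends only on the line through $t$ and $s_i$ (equivalently, $D_i$ is invariant under the $\pi$-rotation about $t$), I would first reflect $s_2$ through $t$ if necessary and replace $\alpha'$ by $\pi - \alpha'$, so that without loss of generality $\alpha'$ is the angle between the two axis lines and lies in $[0,\pi/2]$.

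Next I would translate $D_1$ and $D_2$ into arcs. By definition, $D_1$ restricted to the direction circle is the pair of antipodal arcs $W_1^+ = [\alpha-\epsilon,\, \pi-(\alpha-\epsilon)]$ and $W_1^- = W_1^+ + \pi$, each of arc length $\pi - 2(\alpha-\epsilon)$; similarly $D_2$ corresponds to $W_2^\pm = W_1^\pm + \alpha'$. The heart of the proof is a single calculation: the arc $W_1^+$ ends at angle $\pi - (\alpha-\epsilon)$ while $W_2^+$ starts at angle $\alpha - \epsilon + \alpha'$, so the two overlap provided $\alpha' \leq \pi - 2(\alpha-\epsilon)$, in which case their union is the single arc $[\alpha-\epsilon,\, \pi-(\alpha-\epsilon)+\alpha']$ of arc length $\pi - 2(\alpha-\epsilon) + \alpha'$. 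By central symmetry the antipodal pair $W_1^- \cup W_2^-$ behaves identically, while the mixed intersections $W_1^\pm \cap W_2^\mp$ are empty because their centers sit $\pi - \alpha'$ apart on the circle, which exceeds the sum $\pi - 2(\alpha-\epsilon)$ of their half-widths.

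Combining these observations, $D_1 \cup D_2$ restricted to the direction circle consists of two antipodal arcs, each of arc length $\pi - 2(\alpha-\epsilon) + \alpha'$, which by definition makes it a $\dw$ $D_t$ centered at $t$ with central angle $\pi - 2(\alpha-\epsilon) + \alpha'$. The main obstacle is the bookkeeping needed to ensure that the various angular inequalities hold simultaneously: same-side overlap ($\alpha' \leq \pi - 2(\alpha-\epsilon)$), opposite-side non-overlap, and a legitimate central angle $\leq \pi$. All of these are secured by the hypothesis $\alpha \leq \pi/3$ together with the $\pi$-symmetry reduction $\alpha' \leq \pi/2$; in the corner case where any of them fails, the union $D_1 \cup D_2$ is strictly larger than a single $\dw$ and the lemma's conclusion only becomes stronger.
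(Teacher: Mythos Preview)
Your arc parametrization is the paper's argument in slightly different language: the paper works directly with the bounding lines $l_1,l_2$ of $D_1$ and $l_3,l_4$ of $D_2$, observes that $D_2$ is a rotation of $D_1$ by $\alpha'$, and checks that $l_3$ falls between $l_1$ and $l_2$ so that the union is the $\dw$ bounded by $l_1$ and $l_4$. The gap in your write-up is the justification of the angular inequalities. You assert that same-side overlap ($\alpha'\le\pi-2(\alpha-\epsilon)$), opposite-side non-overlap ($\alpha'<2(\alpha-\epsilon)$), and central angle $\le\pi$ all follow from ``$\alpha\le\pi/3$ together with the reduction $\alpha'\le\pi/2$''. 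They do not: with $\alpha-\epsilon$ near $\pi/3$ the right-hand side of the first inequality is only $\pi/3$, strictly below your bound $\pi/2$ on $\alpha'$; and the second inequality can fail whenever $\alpha-\epsilon<\pi/4$.

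The missing piece is the standing context of the lemma: $(s_1,s_2)$ does \emph{not} $(\alpha-\epsilon)$-cover $t$, so $s_2\notin D_1$. After your $\pi$-reflection this forces $\alpha'<\alpha-\epsilon$, and now all three inequalities are immediate (the first via $\alpha'<\alpha-\epsilon\le\pi-2(\alpha-\epsilon)$ when $\alpha-\epsilon\le\pi/3$, the other two via $\alpha'<\alpha-\epsilon<2(\alpha-\epsilon)$). This is exactly the bound the paper invokes. Your fallback clause is also off: if same-side overlap genuinely failed, $D_1\cup D_2$ would be a union of two disjoint $\dw$s (four wedges) rather than a single larger one, so the conclusion would not ``only become stronger'' in the sense the subsequent argument needs.
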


\begin{proof}
We refer the reader to Figure \ref{fig6} for an intuitive explanation. Formally, let $l$ be the line that passes through $s_1$ and $t$ and let $l_1$ and $l_2$ the two lines that define $D_1$. Since $\angle(s_1ts_2) \notin [\alpha-\epsilon, \pi - (\alpha-\epsilon)]$, it follows that $s_2$ is not in $D_1$. Assume without loss of generality that $s_2$ is between the lines $l$ and $l_1$ in the counterclockwise direction. The same proof follows for the other possible locations of $s_2$. 

Now consider $D_2$ and let $l_3$ and $l_4$ be the defining lines through $t$, while $l'$ is the line that passes through $s_2$ and $t$. Notice that $D_1$ and $D_2$ are identical except that $D_2$ is a rotated copy of the $D_1$. In other words, since $\angle(l,l') = \alpha'$, we also have that $\angle(l_1,l_3) = \alpha'$ and $\angle(l_2,l_4) = \alpha'$.
 
Furthermore, since $\alpha' \leq \alpha - \epsilon$ and $\angle( l_1, l_3) \leq \pi - 2(\alpha - \epsilon)$, we have that when $\alpha - \epsilon < \pi /3$, $l_3$ lies in between $l_1$ and $l_2$ and the union of the two $\dw$s $D_1$ and $D_2$ is a continuous $\dw$ $D_t$ determined by $l_1$ and $l_4$.  It has central angle  $\theta_{D_t}  = \theta_1 + \angle(l_2,l_4) = \pi - 2(\alpha-\epsilon) + \alpha'$. 

\end{proof}

\begin{figure}[t]
\centering
\includegraphics[height=2in]{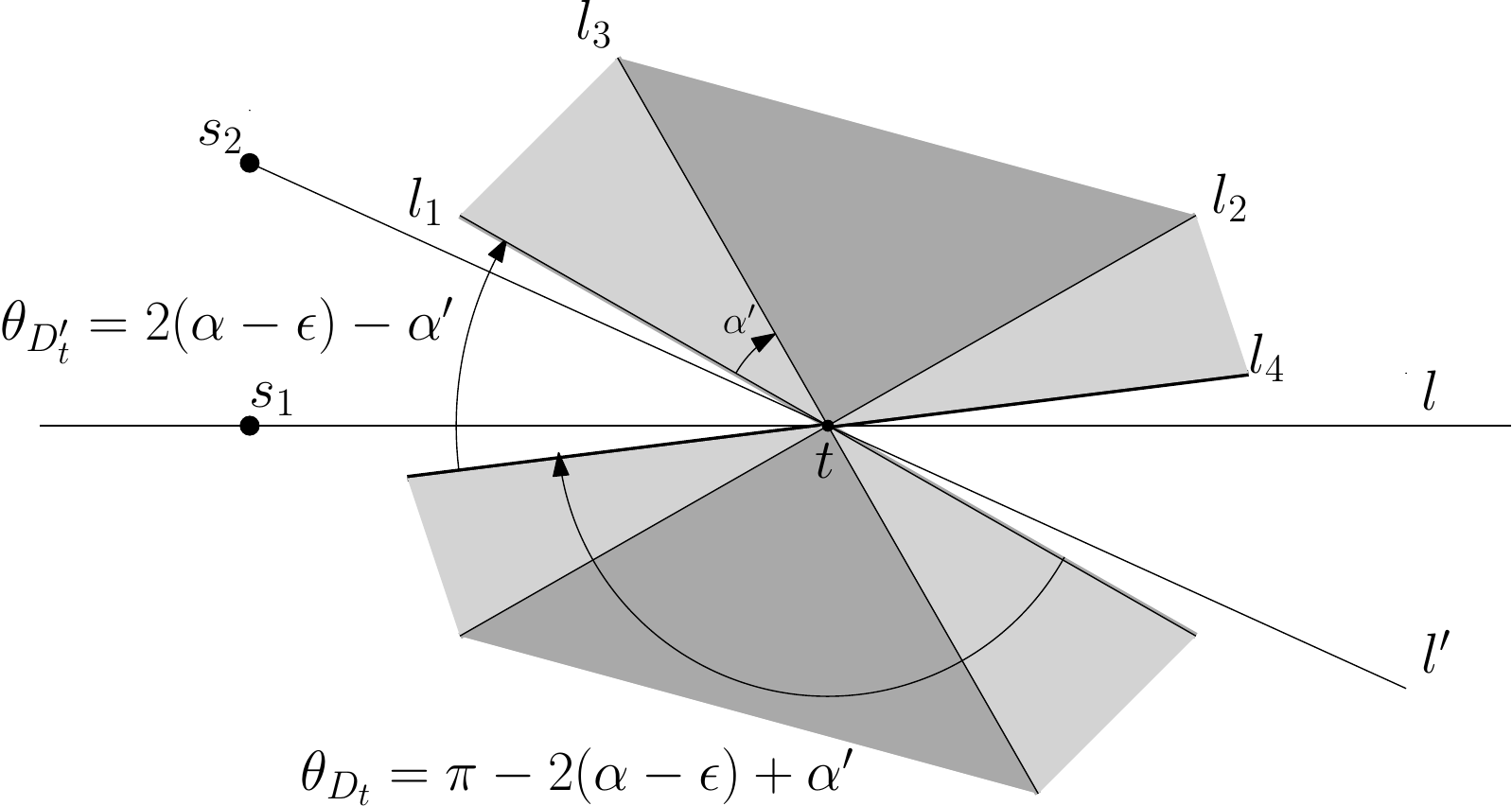}
\caption{Since $\angle(s_1ts_2) = \alpha'$, $D_2$  is a rotation by $\alpha'$ of $D_1$ . Their union is another $\dw$ $D_t$ defined by $l_1$ and $l_4$ with central angle $\theta_{D_t} = \pi - 2(\alpha-\epsilon) + \alpha'$.}
\label{fig6}
\end{figure}

Our goal is to show that one of the two optimal sensors $s^*_1$ and $s^*_2$ must be in $D_t$. The intuition is that by making $D_t$ have a large central angle, we ensure that the complement $D'_t$ of $D_t$ has such a small central angle that it would not be able to contain both $s^*_1$ and $s^*_2$. 

\begin{lemma}  At least one of the two optimal sensors $s^*_1$ and $s^*_2$ assigned to $t$ must be in $D_t$.
\end{lemma}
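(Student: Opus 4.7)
The plan is to argue by contradiction: assume both $s_1^*, s_2^* \notin D_t$, and deduce that $\angle(s_1^* t s_2^*) \notin [\alpha, \pi - \alpha]$, contradicting that $(s_1^*, s_2^*)$ $\alpha$-covers $t$. The crucial observation is that since $D_t$ is a closed double-wedge with central angle $\theta_{D_t} = \pi - 2(\alpha - \epsilon) + \alpha'$, its complement in the plane punctured at $t$ is itself the union of two open opposite wedges, each with central angle
\[
\gamma = \pi - \theta_{D_t} = 2(\alpha - \epsilon) - \alpha'.
\]

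The first step is to bound $\gamma$ from above by $\alpha$. Since the pair $(s_1,s_2)$ $\aeec$-covers $t$ but does not $\aec$-cover it, we may, using the fact that a double-wedge is invariant under rotation by $\pi$ around $t$, assume $\alpha' \in [\alpha - 2\epsilon, \alpha - \epsilon)$ (replacing $\alpha'$ by $\pi - \alpha'$ if needed, so that only the line angle between the rays $ts_1$ and $ts_2$ matters). Substituting $\alpha' \geq \alpha - 2\epsilon$ then gives $\gamma \leq 2(\alpha - \epsilon) - (\alpha - 2\epsilon) = \alpha$, so each open wedge in the complement of $D_t$ has central angle at most $\alpha$.

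The contradiction then follows from a two-case analysis on which of the two open wedges the candidates $s_1^*$ and $s_2^*$ lie in. If both lie in the same wedge, then $\angle(s_1^* t s_2^*) < \gamma \leq \alpha$, so the subtended angle is smaller than $\alpha$. If instead they lie in opposite wedges, a direct angular calculation shows $\angle(s_1^* t s_2^*) > \pi - \gamma \geq \pi - \alpha$, so the subtended angle exceeds $\pi - \alpha$. Either way, $(s_1^*, s_2^*)$ cannot $\alpha$-cover $t$, yielding the desired contradiction and forcing at least one of $s_1^*, s_2^*$ to lie in $D_t$.

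The main obstacle is identifying the complement of $D_t$ cleanly as a double-wedge of central angle $\gamma$ and handling the symmetry that lets us assume $\alpha' \leq \alpha - \epsilon$ without loss of generality; once these structural facts are in place, everything reduces to a routine angular computation together with the key bound $\gamma \leq \alpha$ coming from the $\aeec$-coverage hypothesis on $S$.
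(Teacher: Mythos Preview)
Your proposal is correct and follows essentially the same argument as the paper: both identify the complement $D'_t$ of $D_t$ as an (open) double-wedge of central angle $\gamma = 2(\alpha-\epsilon)-\alpha' \le \alpha$ (using $\alpha' \ge \alpha-2\epsilon$ from the $\aeec$-coverage of $S$), and then dispose of the two cases---$s_1^*,s_2^*$ in the same wedge of $D'_t$ forces the angle $<\alpha$, while opposite wedges force it $>\theta_{D_t}\ge \pi-\alpha$---to contradict $\alpha$-coverage by $(s_1^*,s_2^*)$. Your explicit handling of the symmetry (reducing to $\alpha'\in[\alpha-2\epsilon,\alpha-\epsilon)$ via the line-angle observation) is the only cosmetic difference; in the paper this reduction is absorbed into the WLOG assumption made in the preceding lemma.
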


\begin{proof} Let $D'_t$ be the complement of $D_t$. Notice that $D'_t$ forms another double-wedge defined by $l_1$ and $l_4$ but that it does not actually contain points on these lines.  Moreover, $D'_t$ has a central angle $\theta_{D'_t}=\pi - \theta_D = 2(\alpha-\epsilon) - \alpha'$. 
Since $s_1$ and $s_2$ $(\alpha - 2 \epsilon)$-cover the target, and we are considering the case where $s_2$ is between $l$ and $l_1$, we have that $\alpha' \geq \alpha-2\epsilon$. Hence, we have that $\theta_{D'_t} \leq \alpha$. This implies that $s^*_1$ and $s^*_2$ cannot be both in the same wedge of $D'_t$ without being exactly  situated on the lines $l_1$ and $l_4$( i.e. in $D_t$). The other bad situation would be for them to be in different wedges of $D'$. But then the angle between them would be greater than $\theta_{D_t}$. Since $\alpha' \geq \alpha-2\epsilon$, we get that
 \begin{center} $\theta_{D_t} = \pi - 2(\alpha-\epsilon) + \alpha' \geq \pi - \alpha$,
 \end{center} which would contradict the fact the $\angle(s^*_1ts^*_2) \in [\alpha, \pi-\alpha]$. In other words, at least one of the optimal sensors $s^*_1$ and $s^*_2$ must be in $D_t$.
\end{proof}

\subsection{Construction of the Solution}
\label{alcon}
Once we have determined  that $\OPT$ satisfies the requirements of Lemma \ref{existence}, we will show how to construct a set $S'$ of approximate size that also $\aec$-covers $T'$. As noted before, the proof of Lemma \ref{existence} only talks about angle coverage and is applicable regardless of the other constraints. Depending on the problem at hand, the construction of $S'$ will differ but the general technique is the same. We first illustrate it for $\Pro$ and then mention how to change it for $\Pror$ and $\Prol$. 

In the previous subsection, we described how, for each target $t \in T'$, we can associate a larger $\dw$ $D_t$ that must contain a sensor in $\OPT$. Notice that, by definition, any sensor we pick in $D_t$ will $(\alpha-\epsilon)$-cover $t$ together with a sensor in $S$. We therefore consider the set system induced on $X$ by these $\dw$s $D_t$.  Let $\mathcal{F}(X, \mathcal{R})$ be such a set system, where $\mathcal{R} = \{ D_t \cap X | \hspace{.2cm} t \in T'\}$. In this context, the set $S'$ we are looking for is a hitting set for $\mathcal{F}(X, \mathcal{R})$. In general, a set of sensors $H \subseteq X$ is a hitting set for $\mathcal{F}(X, \mathcal{R})$ if $H \cap D_t \neq \emptyset, \hspace{.2cm} \forall \hspace{.1cm} t \in T'$. The \textsc{Hitting Set} problem asks for a hitting set of minimum cardinality. Let $\tau$ be the size of such an optimal set. Notice that Lemma \ref{existence} shows that $\OPT$ is a hitting set for $\mathcal{F}(X, \mathcal{R})$, so we are guaranteed that $\tau \leq \OPTsize$. Our strategy will therefore be to compute a hitting set of approximate size, and hence obtain a guarantee in terms of $\OPTsize$.

At this point, the geometry of the problems helps even further and we can apply the previously mentioned results of Haussler and Welzl~\cite{HausslerW86} and  Br\"{o}nnimann and Goodrich~\cite{BG}. The technique introduced uses the concept of  an $\epsilon$-net for a given set system. Suppose we are given a set system  $\mathcal{F}(X, \mathcal{R})$. Intuitively, for any $0<\epsilon\leq 1$, an $\epsilon$-net is a set of elements $N \subseteq X$ that intersects all the "heavy" sets of $\mathcal{R}$. In the uniform case, we require $N$ to intersect any set $C\in \mathcal{R}$ with $|C| \geq \epsilon |X|$. More generally, consider a weight function $\mu :X \rightarrow \mathbb{R}_{\geq 0}$ that defines the weight of a subset of $X$ to be the total weight of the points in that subset. An element $A_t \in \mathcal{R}$ is called $\epsilon$-heavy if $\mu(A_t) \geq \epsilon \mu(X)$. An $\epsilon$-net $N$ must then intersect all $\epsilon$-heavy elements of $\mathcal{R}$. In this context, given an algorithm that computes in polynomial time an $\epsilon$-net of size $\bigo{(1/\epsilon)\cdot g(1/\epsilon)}$, the algorithm of Br\"{o}nnimann and Goodrich~\cite{BG} returns a hitting set of size $\bigo{\tau \cdot g(\tau)}$, when $g$ is a monotonically increasing sublinear function. Our strategy will be thus to compute such a small $\epsilon$-net for each specific set system and then use their algorithm as a blackbox to get a $\bigo{g(\tau)}$-approximation for the \textsc{Hitting Set} problem. 

Depending on the underlying geometric objects in the set system, several efficient $\epsilon$-net constructions have been developed.  When the set system has finite VC dimension $d$, Blumer et al~\cite{Blumer} and Koml\'{o}s et al~\cite{Komlos} show that a random sample (under the probability distribution that assigns to $s \in X$ a probability $w(s)/w(X)$ of being sampled) of size $\bigo{(\frac{d}{\epsilon} \log (\frac{1}{\epsilon}))}$ is an $\epsilon$-net with high probability. Better constructions are known for specific set systems: $\epsilon$-nets of size $\bigo{1/\epsilon}$ are known for the case when the underlying objects are halfspaces in $\mathbb{R}^2$ or $\mathbb{R}^3$, pseudo-disks, fat wedges, three-sided axis-parallel rectangles in $\mathbb{R}^2$, and translates of quadrants in $\mathbb{R}^2$ and of fixed convex polytopes in $\mathbb{R}^3$  ~\cite{Laue08,PyrgaR08,Matousek92,KulkarniG10,pach1990some}. 

A particularly simple yet elegant $\epsilon$-net construction was given by Kulkarni and Govindarajan~\cite{KulkarniG10} for the case of $\gamma$-fat wedges. A \textit{$\gamma$-fat wedge} is a wedge having a central angle of at least $\gamma$. When the sets in $\mathcal{R}$ are induced by such $\gamma$-wedges, Kulkarni and Govindarajan ~\cite{KulkarniG10} construct an $\epsilon$-net of size $\bigo{\frac{\pi}{\gamma \epsilon}}$ for arbitrary $\gamma$. When $\gamma \geq \pi/2$, the size of the $\epsilon$-net becomes $\bigo{\frac{1}{\epsilon}}$. This directly applies to the $\Pro$ problem because the wedges in each $\dw$ have a central angle $\theta_D = \pi - 2(\alpha-\epsilon) + \alpha' \geq \pi-\alpha $, since $\alpha' \geq \alpha-2\epsilon$. Therefore, they are $(\pi-\alpha)$-fat. Each $\dw$ can be decomposed into two disjoint wedges, so an $\frac{\epsilon}{2}$-net for $(\pi-\alpha)$-fat wedges is guaranteed to be an $\epsilon$-net for our $\dw$s. Since $\pi - \alpha \geq \pi/2$, we get a hitting set of size $\bigo{\tau}$. Adding this set to $S$, we get Lemma ~\ref{red_a}.

\begin{lemma} \label{red_a}
Let $\epsilon>0$ be such that $\alpha-\epsilon \leq \pi/3$ and $\epsilon \leq \alpha/2$. Let $S \subseteq X$ be a set that $(\alpha-2\epsilon$)- covers $T$. Then we can find a set $S' \subseteq X$ such that $S'$ $(\alpha-\epsilon)$-covers $T$ and $|S'| = |S| + \bigo{1}\cdot \OPTsize$. The running time of the algorithm is $\bigo{\OPTsize \cdot m \log m}$, where $m= |X|$.
\end{lemma}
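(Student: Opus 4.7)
The plan is to instantiate the general strategy outlined in Section~\ref{alcon}, reducing the task to computing an approximate hitting set for the set system of double-wedges and then appending that hitting set to $S$. First I would invoke Lemma~\ref{existence} on the given $S$ (which satisfies its hypotheses since $\alpha-\epsilon\leq\pi/3$ and $\epsilon\leq\alpha/2$) to extract the subset $T'\subseteq T$ of targets that $S$ does not already $\aec$-cover. For each $t\in T'$, I would fix a witnessing pair $(s_1,s_2)\in S\times S$ that $\aeec$-covers $t$ and construct the double-wedge $D_t$ around $t$ generated by the lines through $s_1,t$ and $s_2,t$ rotated by $\alpha-\epsilon$, exactly as in the preceding two lemmas. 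By Lemma~\ref{existence}, $\OPT$ hits every $D_t$, so the optimal hitting set for the set system $\mathcal{F}(X,\mathcal{R})$ with $\mathcal{R}=\{D_t\cap X\mid t\in T'\}$ has size $\tau\leq\OPTsize$.

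Next I would apply the two-level machinery of Haussler--Welzl and Br\"onnimann--Goodrich. Because each $D_t$ decomposes into two disjoint wedges of central angle $\pi-2(\alpha-\epsilon)+\alpha'\geq \pi-\alpha\geq \pi/2$ (using $\alpha\leq\pi/2$ and $\alpha'\geq\alpha-2\epsilon$), these wedges are $(\pi-\alpha)$-fat with fatness parameter at least $\pi/2$. Thus the Kulkarni--Govindarajan construction gives an $(\epsilon/2)$-net of size $\bigo{1/\epsilon}$ for the single-wedge system, and taking the union over the two halves of each $\dw$ yields an $\epsilon$-net of size $\bigo{1/\epsilon}$ for $\mathcal{F}(X,\mathcal{R})$. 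Plugging this net-finder into the Br\"onnimann--Goodrich framework, with $g(\cdot)=\bigo{1}$ (a constant), produces in polynomial time a hitting set $H\subseteq X$ with $|H|=\bigo{\tau}=\bigo{\OPTsize}$.

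Setting $S'=S\cup H$ I would verify the coverage guarantee: targets already $\aec$-covered by $S$ remain so, and for each $t\in T'$ the set $H$ contains some $s'\in D_t$, meaning $(s,s')$ $\aec$-covers $t$ for the appropriate $s\in\{s_1,s_2\}\subseteq S$. Hence $S'$ $\aec$-covers $T$ and $|S'|=|S|+\bigo{1}\cdot\OPTsize$, as required.

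The main obstacle I foresee is the running-time claim $\bigo{\OPTsize\cdot m\log m}$, which does not follow from merely citing Br\"onnimann--Goodrich as a black box. Their algorithm maintains multiplicative weights on $X$, doubling the weight of points inside an uncovered heavy set for $\bigo{\OPTsize\log(m/\OPTsize)}$ rounds; each round invokes the wedge $\epsilon$-net construction and performs a verification step to find a violated constraint. I would therefore need to check that the Kulkarni--Govindarajan net can be computed in $\bigo{m}$ time (it is essentially a sorting-and-sweeping argument over angular order around the apex), amortize the $\bigo{m\log m}$ sorting cost across rounds, and show that verification (finding an uncovered $D_t$) can be done in $\bigo{|T'|\cdot m}$ or better by exploiting the angular structure. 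Bundling these bounds yields the claimed $\bigo{\OPTsize\cdot m\log m}$ running time.
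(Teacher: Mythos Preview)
Your proposal is correct and follows essentially the same approach as the paper: reduce to hitting the set system of double-wedges $\{D_t\cap X\}_{t\in T'}$, observe via Lemma~\ref{existence} that $\OPT$ is a hitting set so $\tau\le\OPTsize$, decompose each $D_t$ into two $(\pi-\alpha)$-fat wedges, invoke the Kulkarni--Govindarajan $\bigo{1/\epsilon}$-size net construction, and plug into Br\"onnimann--Goodrich to get $|H|=\bigo{\OPTsize}$, then set $S'=S\cup H$. Your extra scrutiny of the running-time bound is well placed---the paper simply asserts the $\bigo{\OPTsize\cdot m\log m}$ figure without the round-by-round accounting you sketch.
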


When we consider the $\Pror$ problem, the set of feasible sensor locations becomes the intersection of a $\dw$ with the circle of radius $R$ centered at the corresponding target. Notice that while the $\dw$ captures the angle requirements (as it did for $\Pro$), the circle represents the additional distance constraints (specific to $\Pror$). Hence, for each target $t$, we define the \textit{$\db$} $C_t = D_t \cap \mathcal{D}(t,R)$. The appropriate set system then becomes $\mathcal{F'}(X, \mathcal{R'})$, where $\mathcal{R'} = \{ C_t \cap X | \hspace{.2cm} t \in T'\}$. Similar to $\dw$s, each $\db$ is composed of two disjoint sectors and hence, an $\frac{\epsilon}{2}$-net for sectors would be an $\epsilon$-net for $\db$s. Since each sector is the intersection of two halfspaces and a circle, each of which induce set systems that have constant VC-dimension, we get that it also has constant VC-dimension ~\cite{Matousekbook}. Therefore, it admits an $\epsilon$-net of size $\bigo{(\frac{d}{\epsilon} \log (\frac{1}{\epsilon}))}$ ~\cite{Blumer},~\cite{Komlos}. This, in turns, gives us a hitting set of size $\bigo{d\tau \cdot \log (\tau)}$ for $\mathcal{F'}(X, \mathcal{R'})$. As before, adding this set to $S$, we get Lemma ~\ref{red_ar}.

\begin{lemma}\label{red_ar}  
Let $\epsilon>0$ and $\alpha$ be as before and $S \subseteq X$ a set that $(\alpha-2\epsilon$)- covers $T$ within distance $R'\geq 0$. Then we can find a set $S' \subseteq X$ such that $S'$ $(\alpha-\epsilon)$-covers $T$ within distance $\max \{R,R'\}$ and $|S'| = |S| + \bigo{\log\OPTsize}\cdot \OPTsize$. The running time of the algorithm is $\bigo{\OPTsize\cdot mn \log m} $.
\end{lemma}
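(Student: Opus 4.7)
The plan is to mirror the strategy used for Lemma~\ref{red_a}, but augment the $\dw$ set system to reflect the additional distance constraint intrinsic to $\Pror$. First I would isolate the subset $T' \subseteq T$ of targets that $S$ does not already $\aec$-cover within distance $R'$; the remaining targets require no further work. For each $t \in T'$, the proof of Lemma~\ref{existence} produces a $\dw$ $D_t$ centered at $t$, with central angle at least $\pi-\alpha$, that must contain a sensor $s^* \in \OPT$. Since $\OPT$ is feasible for $\Pror$, that optimal sensor additionally satisfies $d(s^*,t) \leq R$, so intersecting with the disk $\mathcal{D}(t,R)$ yields the $\db$ $C_t = D_t \cap \mathcal{D}(t,R)$, and any sensor we later select inside $C_t$ pairs with one of the $S$-sensors that $\aeec$-cover $t$ (itself within distance $R'$ of $t$) to $\aec$-cover $t$ within distance $\max\{R,R'\}$. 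The family $\mathcal{R}' = \{C_t \cap X : t \in T'\}$ thus defines a hitting-set instance $\mathcal{F}'(X,\mathcal{R}')$ whose optimum $\tau$ satisfies $\tau \leq \OPTsize$.

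Next I would construct a small $\epsilon$-net for $\mathcal{F}'(X,\mathcal{R}')$. Each $\db$ is the disjoint union of two sectors, and each sector is the intersection of two halfplanes with a disk; by standard VC-dimension composition~\cite{Matousekbook}, sectors have constant VC-dimension, so the classical $\epsilon$-net theorems of Blumer et al.~\cite{Blumer} and Koml\'os et al.~\cite{Komlos} yield $\epsilon$-nets of size $\bigo{(1/\epsilon)\log(1/\epsilon)}$ for the family of sectors. To lift this to a net for $\db$s I would invoke the standard halving trick: an $\epsilon$-heavy $\db$ must contain at least one $\epsilon/2$-heavy sector, so any $\epsilon/2$-net for sectors is automatically an $\epsilon$-net for $\db$s.

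Feeding this construction into the Br\"onnimann--Goodrich framework~\cite{BG} as a black box (with $g(1/\epsilon) = \bigo{\log(1/\epsilon)}$) then produces a hitting set $H \subseteq X$ for $\mathcal{F}'(X,\mathcal{R}')$ of size $\bigo{\tau \log \tau} = \bigo{\OPTsize \log \OPTsize}$. Setting $S' = S \cup H$ gives the claimed cardinality bound and, by the pairing argument above, $\aec$-coverage of $T$ within distance $\max\{R,R'\}$. The conceptual obstacle is that attaching the disk to the $\dw$ forfeits the $\bigo{1/\epsilon}$-size $\epsilon$-nets available for fat wedges in Lemma~\ref{red_a}; this is precisely what accounts for the extra $\log \OPTsize$ factor. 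The running time bound $\bigo{\OPTsize \cdot mn \log m}$ follows from the $\bigo{\log \OPTsize}$ reweighting iterations of Br\"onnimann--Goodrich, each manipulating the $\bigo{n}$ double-sectors over $m$ candidate sensors.
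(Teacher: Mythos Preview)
Your proposal is correct and follows essentially the same approach as the paper: intersect each $\dw$ $D_t$ with the disk $\mathcal{D}(t,R)$ to obtain $\db$s $C_t$, observe that these decompose into sectors of constant VC-dimension so that the Blumer--Koml\'os $\epsilon$-net bound applies, and then invoke Br\"onnimann--Goodrich to obtain a hitting set of size $\bigo{\OPTsize\log\OPTsize}$ which is added to $S$. Your treatment is in fact slightly more explicit than the paper's in verifying that the optimal sensor landing in $D_t$ is also within distance $R$ (hence in $C_t$) and that the resulting pair lies within $\max\{R,R'\}$ of the target.
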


Notice that, in the proof of Lemma ~\ref{existence}, we did not require that the sensors in $S$ cover the targets within the fixed distance $R$. This observation allows us to formulate our results in a more general setting in which $S$ $\aeec$-covers $T$ within some distance $R'$, and will prove useful in Section ~\ref{ex} when we set $R'=3R$. 

In the case of $\Prol$, for each target $t$, the appropriate set system $\mathcal{F''}(X, \mathcal{R''})$ is built by intersecting $D_t$ with the set of sensors $V_t$ in $X$ that guard $t$, called the visibility polygon of $t$. When the underlying polygon $P$ is simply connected, the set system $(X,\{ V_t \cap X | \hspace{.2cm} t \in T'\})$ has constant VC-dimension, as pointed out by Valtr ~\cite{VC23}. It follows that $\mathcal{F''}(X, \mathcal{R''})$ also has finite VC-dimension~\cite{Matousekbook}. As such, it has a hitting set of size $\bigo{\OPTsize \log(\OPTsize)}$. 

%However, this approach would only lead to a running time of $\bigo{\OPTsize\cdot mn \log m} $, due to the application of Br\"{o}nnimann and Goodrich~\cite{BG}. In order to get rid of the linear dependency on $m$, we employ the construction of Efrat et al~\cite{JM} used in finding the set of guards that $\alpha/2$-guards targets. 

\begin{lemma}\label{red_al} 
Let $\epsilon>0$ be such that $\alpha-\epsilon \leq \pi/3$ and $\epsilon \leq \alpha/2$. Let $S \subseteq X \subseteq P$ be a set that $(\alpha-2\epsilon$)- guards $T \subseteq Q \subseteq P$. When $P$ is a simply connected polygon, we can find a set $S' \subseteq X$ such that $S'$ $(\alpha-\epsilon)$-covers $Q$ and $|S'| = |S| + \bigo{\log \OPTsize}\cdot \OPTsize$. The running time of the algorithm is $\bigo{\OPTsize\cdot mn \log m} $.
\end{lemma}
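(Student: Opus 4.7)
The plan is to instantiate the general framework of Subsection~\ref{alcon} in the visibility setting. First, I would invoke Lemma~\ref{existence}: since $S$ $\aeec$-guards (and hence $\aeec$-covers) $T$, letting $T' \subseteq T$ be the targets not $\aec$-covered by $S$, for each $t \in T'$ there is a pair $s_1, s_2 \in S$ that $\aeec$-covers but fails to $\aec$-cover $t$, giving a double-wedge $D_t$ at $t$ that must contain at least one sensor of $\OPT$. Because the optimal pair assigned to $t$ $\alpha$-guards $t$, both of its members see $t$; hence the sensor of $\OPT$ guaranteed to lie in $D_t$ also lies in the visibility polygon $V_t$ of $t$. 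Setting $C_t = D_t \cap V_t \cap X$, it follows that $\OPT$ is a hitting set of size $\tau \leq \OPTsize$ for the set system $\mathcal{F''}(X, \mathcal{R''})$ with $\mathcal{R''} = \{ C_t : t \in T'\}$.

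Next I would establish that $\mathcal{F''}$ has constant VC-dimension. Double-wedges are unions of two wedges, each itself the intersection of two halfplanes, so they induce a set system of constant VC-dimension. By Valtr's theorem \cite{VC23}, the set system $(X, \{V_t \cap X : t \in T\})$ of visibility polygons in a simply connected polygon also has constant VC-dimension. The standard fact \cite{Matousekbook} that the set system whose ranges are pairwise intersections of ranges from two finite-VC-dimension systems again has finite VC-dimension then yields the claim for $\mathcal{F''}$. Invoking Haussler--Welzl \cite{HausslerW86} produces $\epsilon$-nets of size $\bigo{(1/\epsilon)\log(1/\epsilon)}$, and feeding this into Br\"{o}nnimann--Goodrich \cite{BG} delivers a hitting set $H \subseteq X$ for $\mathcal{F''}$ of size $\bigo{\tau \log \tau} = \bigo{\OPTsize \log \OPTsize}$. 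Defining $S' = S \cup H$ yields the claimed size bound. Correctness of $\aec$-guarding follows directly: targets in $T \setminus T'$ are already $\aec$-guarded by $S \subseteq S'$, while for $t \in T'$ any $h \in H \cap C_t$ pairs with one of $s_1, s_2 \in S$ to $\aec$-cover $t$, and $h$ sees $t$ since $h \in V_t$.

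The main obstacle I anticipate is realizing the algorithm within the claimed $\bigo{\OPTsize \cdot mn \log m}$ time bound. The Br\"{o}nnimann--Goodrich reweighting loop runs for $\bigo{\OPTsize}$ iterations, and in each iteration one must (i) sample an $\epsilon$-net from the weighted universe $X$, and (ii) verify whether every range $C_t$ is hit or identify an unhit one to double. The subtlety is that one should never materialize the visibility polygons explicitly: instead, preprocess once, for each target $t \in T$, the set of sensors in $X$ that see $t$ (in $\bigo{mn \log m}$ total time), then use this representation together with the double-wedge description to test membership in $C_t$ in logarithmic time per query. Organizing the weight updates and the sampling of the $\epsilon$-net around this precomputation is what delivers the stated running time.
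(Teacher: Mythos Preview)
Your proposal is correct and follows essentially the same route as the paper: form the set system $\mathcal{F''}$ by intersecting each double-wedge $D_t$ with the visibility polygon $V_t$, invoke Valtr's bound together with closure of finite VC-dimension under intersections to get constant VC-dimension, and then apply the Br\"{o}nnimann--Goodrich machinery to obtain a hitting set of size $\bigo{\OPTsize\log\OPTsize}$, which is added to $S$. You actually spell out a detail the paper leaves implicit---namely that the optimal sensor guaranteed by Lemma~\ref{existence} lies not only in $D_t$ but also in $V_t$ (since $\OPT$ $\alpha$-\emph{guards} $t$), so $\OPT$ really is a hitting set for $\mathcal{F''}$---and your observation that both $s_1,s_2$ see $t$ (because $S$ $\aeec$-guards $T$) cleanly ensures the resulting pair $\aec$-guards rather than merely $\aec$-covers.
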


We note that when $P$ has $h$ holes, the VC dimension of the visibility polygon becomes $\bigo{\log h}$ \cite{VC23}. The same approach can therefore be used to construct a set $S'$ of size $\bigo{\log h \cdot \log \OPTsize \cdot \log(\log h \cdot \OPTsize)}$.

\subsection{Iterating to obtain $\an$-coverage}
\label{al}
Given the technical lemmas from before that allow us to refine the angular coverage of a given seed set $S$, we can now develop a more general algorithm that constructs a new set that achieves $\an$-coverage for any $\delta > 1$.  
The idea is to iteratively apply the refinement step (by setting $S = S \cup S'$) $\log \delta$ times, first with $\epsilon = \alpha/2$, then with $\epsilon= \alpha/4$ etc.  At the end of $\log \delta$ iterations, we have that the updated set $S$ $\an$-covers $T$. The running time of the algorithm is $\log \delta$ times the time to find the appropriate hitting set plus the time it takes to find the starting set. This first set (denoted $S_1$) requires special care and depends on the problem at hand. 

Notice that we require $S_1$ to $0$-cover $T$ but one can check that the proof of Lemma $1$ follows in this case even when we do not have two distinct sensors $0$-covering a target. Therefore, in the case of $\Pro$, it suffices to pick $S_1$ to consist of any sensor in $X$ and get the following:

\begin{theorem}\label{app_a} Given $X$, $T$, $\alpha\in[0,\pi/3]$ as above, we can find a set of sensors $S\subseteq X$ such that $S$ $\an$-covers $T$ and $|S| = \bigo{\log \delta}\cdot \OPTsize$.  The running time of the algorithm is $\bigo{\log \delta \cdot \OPTsize \cdot m  \log m}$.
\end{theorem}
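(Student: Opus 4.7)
The plan is to prove Theorem~\ref{app_a} by iteratively invoking Lemma~\ref{red_a}, halving the angular slack at each round. I start by choosing an initial set $S_1 \subseteq X$ consisting of any single sensor. As noted in the text, the proof of Lemma~\ref{existence} goes through even without requiring two distinct sensors for $0$-coverage, so $S_1$ trivially ``$0$-covers'' $T$ in the sense required to bootstrap the recursion.

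Next I iterate Lemma~\ref{red_a} with a geometrically decreasing sequence of slack parameters. For $i = 0, 1, \ldots, \lceil \log \delta \rceil - 1$, set $\epsilon_i = \alpha/2^{i+1}$ and let $S_{i+2}$ be the set obtained from Lemma~\ref{red_a} applied to $S_{i+1}$ with parameter $\epsilon_i$. I would argue by induction that $S_{i+1}$ $(1-1/2^i)\alpha$-covers $T$. The base case $i=0$ is the trivial $0$-coverage of $S_1$. For the inductive step, observe that $(1-1/2^i)\alpha = \alpha - 2\epsilon_i$, so $S_{i+1}$ satisfies the hypothesis of Lemma~\ref{red_a}, and hence $S_{i+2}$ $(\alpha - \epsilon_i) = (1 - 1/2^{i+1})\alpha$-covers $T$. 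The preconditions $\alpha - \epsilon_i \leq \pi/3$ and $\epsilon_i \leq \alpha/2$ required by the lemma both hold, since $\alpha \leq \pi/3$ and $\epsilon_i \leq \alpha/2$ for all $i \geq 0$.

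After $k = \lceil \log \delta \rceil$ iterations, the resulting set $S := S_{k+1}$ achieves $(1 - 1/2^k)\alpha$-coverage, which is at least $(1 - 1/\delta)\alpha$-coverage. For the size bound, each invocation of Lemma~\ref{red_a} adds at most $\bigo{1}\cdot \OPTsize$ sensors, so by a telescoping argument $|S| \leq 1 + k \cdot \bigo{1}\cdot \OPTsize = \bigo{\log \delta}\cdot \OPTsize$. For the running time, each invocation runs in $\bigo{\OPTsize \cdot m \log m}$, and $\log \delta$ iterations give the claimed $\bigo{\log \delta \cdot \OPTsize \cdot m \log m}$ bound.

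There is no real obstacle here, since the work of proving existence, constructing the hitting set via $\epsilon$-nets for fat wedges, and establishing the single-step size blow-up is entirely absorbed into Lemmas~\ref{existence} and~\ref{red_a}. The only mild subtlety is justifying that $S_1$ can be taken to be a single arbitrary sensor, which relies on the author's observation that the existence proof for double-wedges does not actually require two distinct sensors in the seed set at the first step; everything else is a clean induction plus a telescoping sum.
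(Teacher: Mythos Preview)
Your proposal is correct and follows essentially the same approach as the paper: start from a single arbitrary sensor (using the observation that Lemma~\ref{existence} does not require two distinct sensors at the base step), then iterate Lemma~\ref{red_a} with $\epsilon = \alpha/2, \alpha/4, \ldots$ for $\lceil \log \delta \rceil$ rounds, telescoping the size and running-time bounds. The only difference is that you spell out the induction and the verification of the preconditions $\alpha - \epsilon_i \le \pi/3$ and $\epsilon_i \le \alpha/2$ explicitly, which the paper leaves implicit.
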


When it comes to the $\Pror$ problem, we require that the initial set $S_1$ has the property that each target is within distance $R$ of at least one sensor in $S_1$. Without loss of generality, we can assume that $R=1$ and then our problem becomes an instance of the \textsc{Discrete Unit Disk Cover (DUDC)} problem~\cite{disks}. In \textsc{DUDC},  we are given a set $\mathcal{P}$ of $n$ points and a set of $\mathcal{D}$ of $m$ unit disks in the Euclidean plane. The objective is to select a set of disks $\mathcal{D^*} \subseteq \mathcal{D}$ of minimum cardinality that covers all the points. The problem is a geometric version of $\SC$ and is NP-hard \cite{Garey}. Nevertheless, several constant factor approximations have been developed and all could be used to compute a good approximation while balancing the trade-off between the approximation factor and the running time. For our purposes, we use the $18$-approximation by Das et al~\cite{disks} that has a runtime of $\bigo{n\log n + m\log m +mn}$. We note that better approximations are known, but using them in our framework could increase the total runtime. In each iteration, we increase the size of our set by $\bigo{\log \OPTsize} \cdot \OPTsize$ and since $|S_1| \leq 18\cdot \OPTsize$, we get the following:

\begin{theorem}\label{app_ar}  Given $X$, $T$, $\alpha$ and $R$ as above, we can find a set of sensors $S\subseteq X$ such that $S$ $\an$-covers $T$ within distance $R$ and $|S| = \bigo{\log \delta \cdot \log \OPTsize}\cdot \OPTsize$. The running time of the algorithm is $\bigo{\log \delta \cdot \OPTsize\cdot mn \log m} $.
\end{theorem}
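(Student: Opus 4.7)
The plan is to instantiate the iterative framework outlined in Section~\ref{al} using Lemma~\ref{red_ar} as the per-iteration refinement step, seeded by a solution to the \textsc{Discrete Unit Disk Cover} problem as the discussion just before the theorem suggests.

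First I would construct an initial set $S_1\subseteq X$ such that every $t \in T$ lies within distance $R$ of some sensor of $S_1$. Rescaling so that $R=1$, this is literally an instance of \textsc{DUDC} on points $T$ with unit disks centered at the sensors of $X$; since any feasible $\Pror$ solution (and in particular $\OPT$) is also a feasible \textsc{DUDC} cover, applying the $18$-approximation of Das et al.~\cite{disks} produces $S_1$ with $|S_1| \leq 18\cdot \OPTsize$ in time $\bigo{n\log n + m\log m + mn}$. This $S_1$ trivially ``$0$-covers'' $T$ within distance $R$, which matches the hypothesis of Lemma~\ref{red_ar} at the degenerate parameter $\epsilon = \alpha/2$ (so that $\alpha - 2\epsilon = 0$).

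Next, I would set $\epsilon_i = \alpha/2^i$ for $i \geq 1$ and iteratively apply Lemma~\ref{red_ar} with parameter $\epsilon_i$, starting from $S^{(0)} = S_1$. A straightforward induction shows that $S^{(i-1)}$ $(\alpha - 2\epsilon_i)$-covers $T$ within distance $R$: the base case is $\alpha - 2\epsilon_1 = 0$, and the inductive step is precisely Lemma~\ref{red_ar}, which also guarantees $S^{(i)} \supseteq S^{(i-1)}$ with $|S^{(i)}| = |S^{(i-1)}| + \bigo{\log \OPTsize}\cdot \OPTsize$ and that coverage within $\max\{R,R\}=R$ is preserved. The hypotheses $\alpha - \epsilon_i \leq \pi/3$ and $\epsilon_i \leq \alpha/2$ hold throughout since $\alpha \leq \pi/3$ and $\epsilon_i \leq \epsilon_1 = \alpha/2$. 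Taking $k = \lceil \log_2 \delta\rceil$ iterations, the final set $S := S^{(k)}$ $(\alpha - \alpha/2^k)$-covers $T$ within distance $R$, which is at least $\an$-coverage. The size accounting is $|S| \leq 18\cdot\OPTsize + k\cdot \bigo{\log \OPTsize}\cdot\OPTsize = \bigo{\log \delta\cdot \log \OPTsize}\cdot \OPTsize$, and the runtime is dominated by the $k$ calls to Lemma~\ref{red_ar}, each costing $\bigo{\OPTsize \cdot mn\log m}$, for a total of $\bigo{\log \delta \cdot \OPTsize \cdot mn\log m}$.

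The one place that needs care --- and the only real obstacle --- is the first iteration, because $S_1$ is a DUDC cover, not a set of pairs of sensors that $0$-covers in any non-trivial sense. As the authors noted just before Theorem~\ref{app_a}, the proof of Lemma~\ref{existence} still applies in this boundary regime: when $\alpha - 2\epsilon = 0$, one may take the two seed sensors $s_1, s_2 \in S$ used in the argument to coincide (any single sensor within distance $R$ of $t$ suffices), and the conclusion still gives an optimal sensor lying in the double-wedge $D_t$ used to build the hitting set. With this small caveat verified, the induction goes through and the three claims --- angular coverage, distance coverage, and solution size/runtime --- follow as above.
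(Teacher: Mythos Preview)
Your proposal is correct and follows essentially the same approach as the paper: seed with an $18$-approximate \textsc{DUDC} cover so that $|S_1|\le 18\cdot\OPTsize$ and every target is within distance $R$ of some sensor, then iterate Lemma~\ref{red_ar} with $\epsilon=\alpha/2,\alpha/4,\ldots$ for $\lceil\log_2\delta\rceil$ rounds, summing the size and runtime contributions exactly as you do. Your handling of the degenerate first step (allowing $s_1=s_2$ when $\alpha-2\epsilon=0$) matches the paper's own remark preceding Theorem~\ref{app_a}.
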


For $\Prol$, we need  to find a set of sensors $S_1 \subseteq X$ that guard $T$. To this extent, we can again employ the fact that the set of visibility polygons has finite VC-dimension \cite{VC23}. Notice that finding a small $S_1$ that guards $T$ corresponds to the hitting set problem for the set system made of sensors and visibility polygons of target locations. We therefore obtain a set of size $\bigo{\log k^*} \cdot k^*$, where $k^*$ is the size of the smallest set of sensors from $X$ that guard $T$. Since $\OPT$ also guards $T$, we have that $k^* \leq \OPTsize$, so we are guaranteed to obtain a solution of size $\bigo{\log \OPTsize} \cdot \OPTsize$. In each iteration, we increase the size of our set by $\bigo{\log \OPTsize} \cdot \OPTsize$ and since $|S_1| = \bigo{\log \OPTsize} \cdot \OPTsize$, we get the following:

\begin{theorem}\label{app_al} Given polygons $Q \subseteq P$,$X$, $T$, and $\alpha \in [0,\pi/3]$ as above, we can find a set of sensors $S \subseteq X$ such that $S$ $\an$-guards $T$ and $|S| = \bigo{\log \delta \cdot \log \OPTsize}\cdot \OPTsize$. The running time of the algorithm is $\bigo{\log \delta \cdot \OPTsize\cdot mn \log m} $.
\end{theorem}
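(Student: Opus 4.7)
The plan is to instantiate the iterative framework of Section~\ref{al} for the visibility setting, using Lemma~\ref{red_al} as the refinement step and a hitting-set computation on visibility polygons to bootstrap.

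First I would construct the seed set $S_1 \subseteq X$. Following the discussion immediately preceding the theorem, the set system $(X,\{V_t \cap X \mid t \in T\})$ of visibility polygons in a simply connected polygon $P$ has constant VC-dimension by Valtr~\cite{VC23}. Computing a minimum-cardinality hitting set for this system means finding the smallest collection of sensors in $X$ that together see every target in $T$. Since $\OPT$ is itself a feasible hitting set (it $\alpha$-guards $T$ and in particular every target is visible from some sensor in $\OPT$), the optimal hitting set has size $k^* \leq \OPTsize$. Applying the Br\"onnimann--Goodrich framework with a constant VC-dimension $\epsilon$-net of size $\bigo{(1/\epsilon)\log(1/\epsilon)}$ yields an $S_1$ of size $\bigo{\log \OPTsize}\cdot \OPTsize$, and $S_1$ trivially $0$-guards $T$ (i.e., every target is visible from some sensor, which is exactly what the hypothesis of Lemma~\ref{red_al} requires for the base case with $\epsilon = \alpha/2$).

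Next, I would iterate Lemma~\ref{red_al} for $i = 1, 2, \ldots, \log \delta$ times with a geometrically shrinking $\epsilon$-schedule. Writing $\epsilon_i = \alpha/2^{i}$ at iteration $i$, I start with $S$ that $(\alpha - 2\epsilon_1) = 0$-guards $T$ and apply Lemma~\ref{red_al} to produce a new set $S \cup S'$ that $(\alpha - \epsilon_1) = (\alpha/2)$-guards $T$. But $\alpha - \epsilon_1 = \alpha - 2\epsilon_2$, so the updated $S$ now satisfies the hypothesis of Lemma~\ref{red_al} at the next setting $\epsilon_2 = \alpha/4$, producing an $(\alpha - \alpha/4) = (3\alpha/4)$-guarding set. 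Continuing inductively, after $\log \delta$ iterations the coverage guarantee improves to $\alpha - \alpha/2^{\log \delta} = (1 - 1/\delta)\alpha$, i.e.\ $\an$-guarding, as required. Each iteration preserves the fact that $\alpha - \epsilon_i \leq \pi/3$ and $\epsilon_i \leq \alpha/2$ since $\alpha \leq \pi/3$ and $\delta > 1$, so Lemma~\ref{red_al} applies throughout.

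For the size and runtime accounting, each call to Lemma~\ref{red_al} enlarges $S$ by at most $\bigo{\log \OPTsize}\cdot \OPTsize$ sensors, so after $\log \delta$ rounds the total size is bounded by
\[
|S_1| + \log \delta \cdot \bigo{\log \OPTsize}\cdot \OPTsize \;=\; \bigo{\log \delta \cdot \log \OPTsize}\cdot \OPTsize.
\]
Each refinement step runs in $\bigo{\OPTsize \cdot mn \log m}$ time per Lemma~\ref{red_al}, and the seed computation is dominated by this, so the total running time is $\bigo{\log \delta \cdot \OPTsize \cdot mn \log m}$.

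The main obstacle is verifying that the iteration remains valid: the hypothesis of Lemma~\ref{red_al} requires the current set not merely to guard $T$ but to $(\alpha - 2\epsilon_i)$-guard $T$ at the current $\epsilon_i$, which is precisely the conclusion of the previous iteration thanks to the doubling relation $\epsilon_{i-1} = 2\epsilon_{i}$. One must also check that the ``seed condition'' for the very first call -- having an $S_1$ that $0$-guards $T$ in the visibility sense -- is enough for Lemma~\ref{existence}'s proof to go through with $\epsilon = \alpha/2$ even though $S_1$ may not contain two distinct sensors jointly $0$-covering a target; this is exactly the caveat pointed out in Section~\ref{al} for $S_1$ in the $\Pro$ case, and the same argument applies here since visibility is the only additional constraint and it is already satisfied by $S_1$.
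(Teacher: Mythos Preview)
Your proposal is correct and follows essentially the same approach as the paper: bootstrap $S_1$ via a hitting set for the constant VC-dimension visibility set system (using $k^* \le \OPTsize$), then iterate Lemma~\ref{red_al} with the halving schedule $\epsilon_i = \alpha/2^i$ for $\log\delta$ rounds, accumulating $\bigo{\log\OPTsize}\cdot\OPTsize$ sensors per round. Your handling of the base-case caveat (that $S_1$ need only guarantee visibility from a single sensor per target for Lemma~\ref{existence} to apply when $\epsilon=\alpha/2$) matches the paper's remark as well.
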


Notice that we can also apply the algorithm only a constant number of times. In particular, we get the following results:
\begin{itemize}
\item for $\Pro$, a constant factor approximation that achieves $(\frac{1}{c}\cdot \alpha)$-coverage for any constant $c>1$ and runs in time $\bigo{\OPTsize \cdot m\log m}$ and,
\item for $\Pror$ and $\Prol$, a $\bigo{\log \OPTsize}$-approximation that runs in time $\bigo{\OPTsize \cdot mn \log m}$ and achieves the same angular coverage as above.

\end{itemize}

We note that, in the case of $\Prol$, we improve on the result of Efrat et al~\cite{JM}, in  that we approximate the $\alpha$-coverage constraint to any constant factor while maintaining the same approximation factor of $\bigo{\log \OPTsize}$. Moreover, our running times are comparable: $\bigo{n \OPTsize^4 \log^2 n \log m}$ in ~\cite{JM} versus $\bigo{\OPTsize \cdot mn\log m}$ for our approximation. We note that their running time comes from the fact they do not directly use the bounded VC dimension of the set system. Instead, they use a previous algorithm designed by Efrat et al~\cite{EfratH02} for approximating the more general \textsc{Art Gallery} problem when the set of targets is restricted to vertices of that grid. When angle constraints are added, they adapt this algorithm to only consider vertices of the grid that also satisfy $\alpha$-coverage. In our scenario in which targets have to be chosen from a discrete set, we do not need to impose a grid and can directly apply the Br\"{o}nnimann and Goodrich algorithm ~\cite{BG}. For the case in which the sensors can be placed anywhere, their algorithm could be employed instead while maintaining the same approximation guarantees.

\section{Other bi-criteria approximations for $\Pror$}
\label{ex}
The geometric objects at the core of our method are wedges centered at targets whose central angles depend on $\alpha$ and $\epsilon$. These ranges define the set of feasible locations from which we must choose a new set of sensors and are given as input to the afferent hitting set problem. In the case of $\Pror$, the distance constraints require that the sensors we pick be within range $R$ of the target, so our wedges become sectors through intersection with a disk of radius $R$ centered at the target. In this context, we employ the canonical $\epsilon$-net construction of  Blumer et al~\cite{Blumer} and Koml\'{o}s et al~\cite{Komlos} and obtain a $\bigo{\log \OPTsize}$-approximation. 

In an attempt to reduce the approximation factor in this latter case, we consider relaxing the distance constraint and allowing the chosen sensors to be within distance $3R$ of the targets. In other words, we extend the radius of our sectors from $R$ to $3R$. Inspired by the construction of Kulkarni and Govindarajan~\cite{KulkarniG10}, we then propose a deterministic rule for picking sensors and obtain a ``relaxed'' $\epsilon$-net of size $\bigo{\frac{R_I}{R} \cdot \frac{1}{\epsilon}}$, where $R_I$ is the diameter of the largest enclosing ball of all possible sensor locations. We note that this construction is not cyclical: we are not building an $\epsilon$-net for sectors of radius $3R$. In our case, our heavy ranges have the special property that at least $\epsilon \cdot n$ of their points are actually contained inside the sector of radius of $R$. The main difference between our construction and the one in ~\cite{KulkarniG10} is the fact that the objects the latter considers are infinite and, as such, allow for simpler grid-based constructions. In fact, this distinction is indeed the source of the additional $\frac{R_I}{R}$ factor that we incur in our bound. To our knowledge, this is the first $\epsilon$-net construction whose size depends linearly on $\frac{1}{\epsilon}$ and the ratio of the diameter of the input space to the size of the ranges (that is, when size can be appropriately defined). We note that the $\bigo{\frac{1}{\epsilon}}$ construction of Pach and Woeginger~\cite{pach1990some} for translates of convex polygons does implicitly depend on solving the problem for points contained inside a bounded square. It is unclear, however, how to adapt their method for the case in which the ranges are sectors of similar radius but can have arbitrary central angles and orientations. 

In order to overcome this barrier, we look towards the end goal of our construction: that of computing a small hitting set. In this context, we use the shifting technique of Hochbaum and Maass~\cite{Hochbaum:1985:ASC:2455.214106} to first partition our space into cells of bounded width and height and apply our $\epsilon$-net construction to obtain good hitting sets in those restricted spaces. The analysis then yields an overall hitting set of size $\bigo{\OPTsize}$ that achieves the desired $\aec$-coverage and is within distance $3R$ of the targets. The complete argument is rather involved and we defer the exact details to Appendix~\ref{boundedeps}. Formally, we get that:

\begin{theorem}\label{app_arapprox}  Given $X$, $T$, $\alpha>0$ and $R$ as above, we can find a set of sensors $S\subseteq X$ such that $S$ $\an$-covers $T$ within distance at most $3R$ and $|S| = \bigo{\log \delta}\cdot \OPTsize$.  The running time of the algorithm is $\bigo{\OPTsize\cdot mn \log m\log n} $.
\end{theorem}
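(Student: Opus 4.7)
The plan is to stay within the iterative framework of Section~\ref{co}. Starting from an initial set $S_1$ that covers $T$ within distance $R$ (via the DUDC approximation used in Theorem~\ref{app_ar}), I would apply a refinement step $\log\delta$ times, each round producing a set $S'$ so that $S\cup S'$ improves the angular coverage from $\aeec$ to $\aec$ while still respecting a $3R$ distance bound. To reach $|S| = \bigo{\log\delta}\cdot\OPTsize$, the key is that each iteration adds only $\bigo{\OPTsize}$ sensors, which requires replacing the $\bigo{\log\OPTsize}$-approximate hitting-set subroutine used in Lemma~\ref{red_ar} with a constant-factor approximation in this relaxed-distance regime.

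The set system solved each iteration is a relaxation of that of Lemma~\ref{red_ar}. By Lemma~\ref{existence}, every uncovered target $t\in T'$ has a witness $s^\star\in\OPT$ inside the double-sector $C_t = D_t\cap\mathcal{D}(t,R)$. Loosening the distance budget to $3R$ means we are allowed to hit the inflated range $C_t^{3R} = D_t\cap\mathcal{D}(t,3R)$; $\OPT$ is still a valid hitter, so the optimum hitting set has size at most $\OPTsize$. The technical heart, and main obstacle, is constructing an $\epsilon$-net of effective size $\bigo{1/\epsilon}$ for this asymmetric set system: a range $C_t^{3R}$ counts as $\epsilon$-heavy only when its inner core $C_t$ carries weight at least $\epsilon\mu(X)$, while the net is free to place its point anywhere in the outer range.

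My plan has two stages. In the first, I would build a ``relaxed'' $\epsilon$-net in the spirit of Kulkarni--Govindarajan~\cite{KulkarniG10}: overlay an axis-aligned grid of side $\Theta(R)$ on the bounding box of $X$ and combine per-cell fat-wedge $\epsilon$-nets, so that any $\epsilon$-heavy inner sector $C_t$ is hit by a net point that, lying in a cell meeting $C_t$, automatically sits within distance $\bigo{R}$ of $t$ and thus inside $C_t^{3R}$. This yields a net of size $\bigo{\frac{R_I}{R}\cdot\frac{1}{\epsilon}}$. In the second stage, I would eliminate the $\frac{R_I}{R}$ factor at the hitting-set level via the Hochbaum--Maass~\cite{Hochbaum:1985:ASC:2455.214106} shifting technique: partition the plane into meta-squares of side $\Theta(R)$ under a constant number of shifts, run Br\"onnimann--Goodrich~\cite{BG} inside each meta-square with the constant-size local net, and take the shift minimizing the total size. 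A standard charging argument, in which each sensor of $\OPT$ appears in $\bigo{1}$ meta-squares across all shifts, then delivers an overall hitting set of size $\bigo{\OPTsize}$.

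Plugging this $\bigo{\OPTsize}$ per-round guarantee into the iterative framework and summing over $\log\delta$ rounds yields $|S|=\bigo{\log\delta}\cdot\OPTsize$ with $\an$-coverage within distance $3R$. The running time follows from the initial DUDC call plus $\log\delta$ rounds of Br\"onnimann--Goodrich on $n$ ranges and $m$ sensors, totalling $\bigo{\OPTsize\cdot mn\log m\log n}$. The subtle point I expect to invest the most effort in is the interaction between shifting and the angular constraint at meta-square boundaries: a target straddling a boundary must still be $\aec$-covered by a sensor lying in its double-wedge \emph{and} within distance $3R$, and the $3R$ slack (rather than the original $R$) is precisely what absorbs the boundary losses introduced by shifting.
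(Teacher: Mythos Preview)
Your plan matches the paper's approach essentially step for step: the iterative framework of Section~\ref{co}, a constant-factor hitting set per round obtained by combining a relaxed $\epsilon$-net of size $\bigo{\frac{R_I}{R}\cdot\frac{1}{\epsilon}}$ with the Hochbaum--Maass shifting technique, and summing over $\log\delta$ rounds. The paper's actual $\epsilon$-net construction is somewhat more intricate than ``per-cell fat-wedge nets'' --- it crosses $4/\epsilon$ data-dependent horizontal slices with width-$R$ vertical strips, decomposes each sector into axis-parallel sub-sectors over three rotated coordinate systems, and picks convex-hull extremes in each block --- but this is precisely the Kulkarni--Govindarajan-style construction you allude to, and your identification of the boundary interaction under shifting as the delicate point is accurate.
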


Another interesting special case of $\Pror$ is the one in which $\alpha=0$,  since it requires us to place two distinct sensors within distance $R$ of each target, in the spirit of fault- tolerance. This is, in general, the \textsc{fault tolerant $k$-suppliers} problem as defined by Khuller et al~\cite{Khuller} that requires us to select $k$ suppliers such that each client has $\delta$ suppliers within an optimal distance $r^*$ of it. Under arbitrary metrics, Khuller et al~\cite{Khuller} develop a $3$-approximation: they select $k$ sensors that are guaranteed to cover the clients within $3\cdot r^*$. Karloff and then Hochbaum and Shmoys~\cite{HochbaumS} show that this factor is tight for the general \textsc{$k$-suppliers} problem (i.e. $\delta =1$), unless P=NP. When the underlying space is $\mathbb{R}^d$ with the $\ell_2$ metric, Nagarajan et al~\cite{euclideanksupplier} improve this factor to $(1+\sqrt{3})$ for Euclidean \textsc{$k$-suppliers}. They crucially use the observation that three clients who are pairwise more than $\sqrt{3}\cdot r^*$ apart from each other can never be covered by the same supplier within distance $r^*$ and reduce the problem of finding $k$ suppliers to that of computing a minimum edge cover. A similar approach can be used in our case, except the structure of the optimal solution corresponds to a simple b-edge cover. In Appendix ~\ref{app1}, we present the details of the analysis and guarantee that $\delta$ suppliers are within distance $(1+\sqrt{3})r^*$ of each client :

\begin{theorem}\label{ksupplierseuc} There exists a polynomial time $(1+\sqrt{3})$-approximation algorithm for the Euclidean $\textsc{fault}$  $\textsc{ tolerant k\textsc{-suppliers}}$ in any dimension for arbitrary $\delta \geq 1$.
\end{theorem}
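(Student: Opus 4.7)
The plan is to lift the Euclidean $k$-suppliers algorithm of Nagarajan et al.\ (the $\delta = 1$ case) to arbitrary $\delta \geq 1$ by replacing minimum edge cover with minimum simple $b$-edge cover. First I would guess the optimal radius $r^*$ by binary searching over the $O(mn)$ distinct pairwise client--supplier distances; for each candidate $r$, I run the procedure below and keep the smallest $r$ for which it produces a feasible fault-tolerant solution of size $\leq k$.

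Next, construct the auxiliary graph $G$ on the client set whose edges join pairs at distance at most $\sqrt{3}\cdot r^*$. The central geometric fact, valid in any Euclidean dimension, is that three points lying in a ball of radius $r$ must contain a pair at distance at most $\sqrt{3}\,r$: equivalently, the maximum of the minimum pairwise distance over three points in such a ball is $\sqrt{3}\,r$, realized by an equilateral triangle inscribed on a great circle. Consequently, for any supplier $s$ in an optimal $k$-supplier solution $S^*$, the induced subgraph of $G$ on the coverage set $C_s = \{c : d(c,s) \leq r^*\}$ has no independent set of size $3$.

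Compute a minimum simple $b$-edge cover of $G$ with $b(v) = \delta$ for every client $v$, where loops (representing a supplier dedicated to a single client) are also allowed and each contributes one unit of $b$-coverage to its endpoint. Minimum simple $b$-edge cover is polynomially solvable via a standard reduction to weighted general matching. The key structural lemma to establish is that this minimum cover has size at most $k$: starting from the assignments of $S^*$, I would use the ``no independent set of size $3$ in $C_s$'' property together with a charging/exchange argument to extract, from each supplier $s \in S^*$, either one edge or one loop of $G$, in such a way that the resulting multiset forms a valid simple $b$-edge cover and uses no edge twice.

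Finally, round the computed cover $F$ to an actual supplier set. For each edge $(c_1,c_2) \in F$, pick any $s \in X$ with $d(s,c_1) \leq r^*$ (guaranteed by the feasibility of $r^*$); the triangle inequality then gives $d(s,c_2) \leq r^* + \sqrt{3}\,r^* = (1+\sqrt{3})r^*$, so $s$ serves both endpoints. For each loop at $c$, pick any $s \in X$ with $d(s,c) \leq r^*$. Distinctness of the $\delta$ suppliers assigned to each client follows from a Hall-type bipartite matching argument between the $\delta$ cover-slots at each client and the eligible sensors, using that $S^*$ itself provides $\delta$ distinct suppliers within distance $r^*$ of every client. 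The main obstacle is the lower-bound step: extracting exactly one edge (or loop) per OPT supplier while guaranteeing that the resulting multiset is a \emph{simple} $b$-edge cover is delicate, especially when a single OPT supplier covers three or more clients or when a client is served by $\delta$ overlapping OPT suppliers. The careful combinatorial construction that maps OPT to a valid $b$-edge cover of size $\leq k$ is where the bulk of the technical work lies, and is the place where the argument genuinely extends beyond the $\delta = 1$ case of Nagarajan et al.
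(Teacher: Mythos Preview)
Your proposal has a genuine gap in the lower-bound step, and it is not a technicality: as stated, the claim that the minimum simple $b$-edge cover of your graph $G$ has size at most $k$ is false.

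You build $G$ on the \emph{entire} client set $V$, with $b(v)=\delta$ for every $v\in V$. Any simple $b$-edge cover of such a graph has at least $\lceil \delta\,|V|/2\rceil$ edges, simply by summing degrees. But $k$ can be arbitrarily smaller than $\delta\,|V|/2$. Concretely: take $n$ clients all at the origin and $\delta$ suppliers also at the origin, so $r^*=0$ and $k=\delta$. Your $G$ is the complete graph on $n$ vertices, and a $b$-edge cover with $b\equiv\delta$ needs at least $\delta n/2$ edges. For $n>2$ this already exceeds $k$. Your ``one edge per OPT supplier'' charging cannot work here: each supplier covers all $n$ clients, yet the single edge you extract from it is incident to only two of them, so $\delta$ suppliers account for at most $2\delta$ incidence slots while you need $\delta n$.

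The paper avoids this by a step you omitted: before forming $G$, it restricts to a maximal subset $P$ of clients with pairwise distance strictly greater than $\sqrt{3}\,r^*$. On $P$, the geometric lemma guarantees that every supplier lies within $r^*$ of at most \emph{two} members of $P$, so each OPT supplier really does correspond to a single edge (or loop) of the multigraph on $P$ whose edges are suppliers. Now the $k$ edges coming from $S^*$ form a simple $b$-edge cover of $P$ with $b\equiv\delta$, giving the desired bound $|F|\le k$; the remaining clients are handled afterward by maximality of $P$ and the triangle inequality. Note also that the paper's edges are indexed by suppliers (so $G$ is a multigraph), not by the client-distance threshold you use; this is what makes both the lower bound and the rounding to $\delta$ \emph{distinct} suppliers per client immediate, without any Hall-type argument.
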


While the \textsc{$k$-suppliers} problem requires us to minimize the covering radius rather than the size of the set of suppliers, the analysis of the above algorithm actually relies on the existence of $k$ suppliers that cover all the clients within a guess radius $R$ and the algorithm itself never picks more than $k$ sensors. In our case, $k = \OPTsize$, so we will always pick at most $\OPTsize$ sensors. The binary search technique of  Hochbaum and Shmoys~\cite{HochbaumS} can then be used to obtain guarantees with respect to the optimal covering radius $r^*$. The algorithm hence starts with a guess $R$ and returns a set of $k$  sensors that cover everything within distance $ (1+\sqrt{3}) \cdot R$. Since, in the case of $\Pror$, we already know the value of $R$, we automatically get the following result as well:

\begin{theorem}\label{eucl}  Given $X$, $T$, and $R$ as above, we can find a set of sensors $S\subseteq X$ such that $S$ $0$-covers $T$ within distance $R \cdot (1+\sqrt{3})$ and $|S| = \OPTsize$, where $\OPTsize$ is the cardinality of the smallest set of sensors that $0$-covers $T$ within distance $R$. The running time of the algorithm is $\bigo{n^2\log n(m+n \log n)} $.
\end{theorem}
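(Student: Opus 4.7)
The plan is to reduce the $\alpha=0$ case of $\Pror$ directly to the Euclidean \ft{} problem with fault parameter $\delta=2$ and then invoke Theorem \ref{ksupplierseuc}. Concretely, we identify the clients with the target set $T$ and the supplier pool with $X$; the requirement that each client have two distinct suppliers within some covering radius is precisely the $0$-coverage requirement from the definition of $\Pror$. Since the optimal $\Pror$ solution $\OPT$ places at least two distinct sensors within distance $R$ of every target, it witnesses the fact that the covering radius $R$ is feasible in the \ft{} formulation with $k=\OPTsize$ suppliers.

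First, I would feed the algorithm from Theorem \ref{ksupplierseuc} the radius guess $R$. As noted in the discussion preceding the statement, the underlying routine constructs a b-edge cover on an auxiliary graph built from pairs of targets that are mutually within distance $\sqrt{3}R$ of some common candidate supplier, and whenever the guess supports a valid $k$-supplier configuration, it outputs a set of at most $k$ suppliers covering every client within distance $(1+\sqrt{3})R$. Since $R$ admits the configuration $\OPT$ with $k=\OPTsize$, the algorithm succeeds and returns some $S\subseteq X$ with $|S| \leq \OPTsize$ such that every $t\in T$ has two distinct sensors of $S$ within distance $(1+\sqrt{3})R$, i.e.\ $S$ $0$-covers $T$ within distance $(1+\sqrt{3})R$. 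The matching lower bound $|S|\geq \OPTsize$ is not needed for the theorem, but it can be enforced either by iterating the b-edge cover over $k=1,2,\dots$ or by observing that the b-edge cover of minimum cardinality is precisely what is produced.

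Because $R$ is part of the input, we bypass the Hochbaum--Shmoys binary search over the optimal covering radius $r^*$ entirely, saving the $\log$ factor that would otherwise appear. The dominant cost is therefore a single invocation of the matching/b-edge cover subroutine on an auxiliary graph whose vertex set comes from $X$ and whose edge set encodes the geometric compatibility between targets; this accounts for the stated $\bigo{n^2\log n(m+n\log n)}$ running time.

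The main obstacle is essentially a bookkeeping one: the geometric content — namely the fact that any three clients pairwise more than $\sqrt{3}r^*$ apart cannot share a common supplier at distance $r^*$, and the reduction to a b-edge cover — already lives inside the proof of Theorem \ref{ksupplierseuc} (Appendix \ref{app1}). The only nontrivial checks left are (i) confirming that no step of the b-edge cover construction inflates the output beyond $k=\OPTsize$ sensors, which follows because each suppler is selected at most once during the cover construction, and (ii) verifying that the distance bound $(1+\sqrt{3})R$ transfers verbatim since we use $R$ itself as the guess radius.
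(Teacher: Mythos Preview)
Your proposal is correct and follows essentially the same route as the paper: reduce the $\alpha=0$ case to Euclidean \ft{} with $\delta=2$, feed in the known radius $R$ (so no Hochbaum--Shmoys binary search is needed), and appeal to the $b$-edge cover argument behind Theorem~\ref{ksupplierseuc} to output at most $\OPTsize$ sensors within distance $(1+\sqrt{3})R$. One small inaccuracy: the auxiliary graph in the Nagarajan et al.\ construction has as its vertex set a maximal subset $P\subseteq T$ of \emph{targets} that are pairwise more than $\sqrt{3}R$ apart (not vertices from $X$), with edges corresponding to suppliers that cover two of them within $R$; this is what makes the $b$-matching runtime $\bigo{n^2\log n(m+n\log n)}$ line up with the paper's $n=|T|$, $m=|X|$ convention.
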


\bibliographystyle{plain}
\bibliography{paper}

\appendix
\makeatletter
\edef\thetheorem{\expandafter\noexpand\thesection\@thmcountersep\@thmcounter{theorem}}
\makeatother
\section{Using $\epsilon$-nets of size $\bigo{ \frac{R_I}{R}\cdot \frac{1}{\epsilon}}$}
\label{boundedeps}

Notice that, in the case of $\Pror$, each range was induced by the intersection of a double wedge and a circle, both centered at the target. As we have previously seen, fat wedges allow for an $\epsilon$-net of size $\bigo{\frac{1}{\epsilon}}$\cite{KulkarniG10}. In essence, this comes from the fact that wedges can extend infinitely in one direction, so there are good rules for constructing an $\epsilon$-net that can guarantee that, eventually, one of its points will intersect an $\epsilon$-heavy wedge. In the case of sectors, the additional requirement that they be bounded by a circle, however, makes achieving a similar result considerably harder. 

In general, one possible technique for building a good $\epsilon$-net is to construct a grid-like structure on top of the input points and pick points (deterministically or randomly) that will be guaranteed to intersect the heavy ranges. Without any guarantees on the sizes of the ranges (for some notion of size), one could split the input points into vertical and horizontal strips that each contain some constant fraction of $\epsilon\cdot n$ points and pick points in each such cell. The heavy ranges will be then guaranteed to intersect at least a constant fraction of these cells, which could then be used to show that they also contain at least one of the chosen points. The direct consequence of allowing both such a horizontal and vertical discretization, however, would be that the number of cells would increase quadratically in $\frac{1}{\epsilon}$, leading to sub optimal bounds on the size of the $\epsilon$-net. Better constructions would have to somehow allow such a discretization to happen in only one direction and present efficient techniques for picking less that $1/\epsilon$ points in each such cell. As an example, we refer the reader to the result of Aronov et al.~\cite{AronovES10} for a powerful construction that builds an $\epsilon$-net of size $\bigo{\frac{1}{\epsilon} \log \log (\frac{1}{\epsilon})}$ for a variety of objects such as axis-parallel rectangles, boxes and $\alpha$-fat triangles.

In this context, the elegant construction of Kulkarni et al~\cite{KulkarniG10} splits the input points in only one such direction (say horizontally) and then picks a constant number of points in each generated strip. The fact that wedges themselves extend indefinitely in the other (complementary) direction is key to showing that they contain one of the chosen points. In order to obtain a similar result and at the same time handle the boundedness of our ranges, we employ the fact that all the sectors have a similar radius $R$. This allows up to further split the points in the perpendicular direction (vertically) but this time, in equally spaced strips of fixed width $R$. Intuitively, one can think of these vertical strips as bounding the horizontal strips in a way that mimics the way our sectors are bounded wedges. This, however, is not enough to ensure that a good rule exists for picking a constant number of points in each cell. In particular, the rule of Kulkarni et al~\cite{KulkarniG10} does not work either. Essentially, this comes from the fact that the intersection of a particular sector with each such cell yields a shape that is rather cumbersome. We deal with this issue by extending the sectors in a way in which each such possible intersection looks roughly like the intersection of an infinite wedge with one of our horizontal strips.  In this context, we bear in mind the fact that our sectors are centered at the target and that any point we pick in our $\epsilon$-net represents a sensor that should respect the angle and distance constraints. To this end, our construction guarantees that we pick sensors that will never violate the angular constraint and will be within $3R$ of the target they are assigned to. 

Finally, we note that, as such, our rule picks a constant number of points in each such cell and yields an $\epsilon$-net of size  $\bigo{ \frac{W}{R}\cdot \frac{1}{\epsilon}}$, where $W$ is basically the width of our input space (i.e. $W = \max \{ x_r-x_l, y_t-y_b\}$, where $x_r$ and $x_l$ (and $y_b$ and $y_l$) are the minimum and maximum over all $x$-coordinates ($y$-coordinates) of points in $X$). In general, $W$ could be as high as $\OPTsize$, so in order to get rid of this dependency, we employ the shitting strategy of Hochbaum and Maass~\cite{Hochbaum:1985:ASC:2455.214106} that allows us to construct a global solution by individually solving the problem on instances of fixed width. This further incurs a constant factor in our final solution size.

We now proceed to formally describe the ingredients. As usual, we will focus on describing an $\epsilon$-net for sectors, noting that this translates into a $2\epsilon$-net for double sectors. 

\begin{theorem}
In the case in which the ranges are induced by sectors of radius $R$ that have central angle in the range $[\pi-\alpha, \pi-\alpha/2]$, for $\alpha \in [0,\pi/2]$, there exists an $\epsilon$-net construction $H$ of size $\bigo{\frac{R_I}{R} \cdot \frac{1}{\epsilon}}$ that guarantees that, for each $\epsilon$-heavy sector, there exists a point in $H$ that intersects the corresponding extended sector of radius $3R$. In this result, $R_I$ represents the radius of the smallest enclosing ball of all the input points.
\end{theorem}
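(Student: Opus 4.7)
The plan is to build $H$ via a two-dimensional partitioning scheme that combines a count-based horizontal partition with an $R$-wide vertical partition, selecting a constant number of extremal points per cell. The analysis then passes through an \emph{extension} of each sector to radius $3R$: within any cell near the apex, the extended sector looks like the intersection of an infinite fat wedge with a horizontal strip, which restores the applicability of the Kulkarni--Govindarajan-style extremal arguments~\cite{KulkarniG10} that fail for the original bounded sectors. The $3R$ slack in the theorem statement is precisely what permits this extension.

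For the construction, order $X$ by $y$-coordinate and form $\bigo{1/\epsilon}$ horizontal strips, each containing roughly $\epsilon|X|/c$ input points for a constant $c$ depending only on $\alpha$. Overlay vertical slabs of width $R$ covering the smallest enclosing ball of $X$; this produces $\bigo{R_I/R}$ slabs and $\bigo{(R_I/R)\cdot(1/\epsilon)}$ cells in total. Within each cell, place into $H$ the points that are extremal with respect to a constant-size set of directions $\Theta$, chosen so that every admissible sector aperture in $[\pi - \alpha, \pi - \alpha/2]$ is covered by $\Theta$ up to a small angular tolerance. Correctness then proceeds as follows: let $C_t$ be an $\epsilon$-heavy sector of radius $R$ at $t$ and write $\widetilde{C}_t$ for its extension to radius $3R$. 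Since $C_t$ captures $\epsilon|X|$ points while each strip carries at most $2\epsilon|X|/c$, its mass spreads across $\Omega(c)$ strips; because the vertical extent of $C_t$ is at most $2R$, for $c$ large enough at least one such strip $H_i$ has height much less than $R$, so the cell $H_i \cap V_j$ at horizontal distance at most $R$ from $t$ contains a positive fraction of $C_t$'s mass and has diameter $\bigo{R}$. Within that cell, $\widetilde{C}_t$ coincides with the intersection of the corresponding infinite wedge and $H_i$, since the radius $3R$ exceeds the cell's diameter. The extremal rule on $\Theta$ then returns a representative $p \in \widetilde{C}_t$ by applying the fat-wedge argument of~\cite{KulkarniG10} to this wedge-strip shape; by construction $\|p - t\| \leq 3R$, as required.

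The main obstacle is verifying the wedge-strip approximation on which the extremal argument depends: one must show that, for every cell within horizontal distance $R$ of the apex, the portion of $\widetilde{C}_t$ inside that cell agrees with the intersection of an infinite wedge and the horizontal strip. This is exactly where the enlargement from $R$ to $3R$ is essential: a relevant cell near the apex has diameter $\bigo{R}$, and extending the sector ensures that the circular arc of $\widetilde{C}_t$ does not intrude, leaving only the two straight wedge boundaries as active constraints. Ensuring this uniformly across cells and orientations, together with choosing $\Theta$ small enough that a constant number of extremes per cell suffice for every admissible aperture, constitutes the main technical work. Once the cell-level guarantee is in place, the pigeonhole threading argument across horizontal strips completes the proof.
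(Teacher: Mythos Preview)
Your partitioning scheme---count-based horizontal strips crossed with $R$-wide vertical slabs---matches the paper's, and the intuition that extending to radius $3R$ pushes the circular arc outside the relevant cell is correct. The gap is in the per-cell selection rule and in the argument that it hits the extended sector.

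Picking points extremal in a fixed finite direction set $\Theta$ does not suffice, even when the wedge is fat and the cell contains many points of $C_t$. Consider a cell with $\Omega(\epsilon n)$ input points clustered near its center together with four points at its corners: every direction-extremal point is a corner, yet a wedge of aperture just above $\pi/2$ can be oriented to contain only the central cluster and none of the corners. The Kulkarni--Govindarajan argument you invoke does \emph{not} use direction-extremal points; it selects the pair $(p_i,N(p_i))$ on the convex hull of all points in or \emph{above} the current slice, and the ``above'' part is essential. The paper inherits exactly this: within each vertical strip $j$ it takes the convex hull $H_{ij}$ of all points of strip $j$ lying in slice $i$ or higher, picks $p_{ij}$ and $N(p_{ij})$, and additionally the leftmost and rightmost points of the block. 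To make that hull argument go through, the paper first decomposes every sector into \emph{axis-parallel} pieces (one side horizontal), using three rotated coordinate systems so that each piece has central angle at most $\pi/3$ or exactly $\pi/2$; the $\pi/3$ bound is precisely what yields $R/\cos\theta \le 2R$ and $R(1+1/\cos\theta)\le 3R$ in the subsequent case analysis. Your direction set $\Theta$ does not substitute for this decomposition.

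There is a second, smaller gap: your pigeonhole step establishes only that $C_t$ meets $\Omega(c)$ horizontal strips, not that any single cell carries a positive fraction of its mass. The paper never needs a mass-in-cell lower bound; it only needs each axis-parallel piece of the sector to span at least two slices, after which the convex-hull selection together with a two-case analysis---depending on whether the vertical strip boundary cuts the sector's arc or its slanted side---directly produces a point in the extended sector of radius $3R$.
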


In other words, let $A$ be a generic sector of radius $R$ with central angle $\theta \in [\pi-\alpha, \pi-\alpha/2]$ and let $A'$ be the corresponding sector we obtain by extending the radius to $3R$. If $A$ is $\epsilon$-heavy, we will have that the set $H$ intersects $A'$ non trivially. 

\begin{proof}
Consider an arbitrary system of coordinate axes. We first decompose each sector $A$ into smaller ones that have one side parallel to the coordinate axes, which we call \textit{axis parallel sectors}. Because its central angle is smaller than $\pi$, $A$ will decompose into at most $3$ smaller axis parallel sectors, each with central angle at most $\pi/2$. If we build an $\epsilon$-net for axis parallel sectors, this will turn into a $3\epsilon$-net for the more general type of sectors, so we will restrict out attention to the former. Notice that there are $8$ different types of axis parallel sectors. We will describe the construction for one such type and note that it can be intuitively modified to work for the other types. In particular, we will look at sectors that are formed by the intersection of one horizontal halfspace and another one defined by a line with positive slope, as seen in Fig. ~\ref{fig:fig3}(a). 

For reasons that will become evident later in the proof, however, we must require that all possible axis-parallel sectors either have central angle of at most $\pi/3$ or exactly $\pi/2$. To that end, we consider $2$ additional coordinate axes that are rotated copies of the initial coordinate axes, each by a factor of $\pi/3$, one in clockwise and the other one in counterclockwise direction. It can be easily checked that now, given any sector $A$, there exists a system of coordinate axes that will decompose $A$ into at most $3$ axis parallel sectors that all have central angles that are either $\leq \pi/3$ or exactly $\pi/2$. By constructing separate $\epsilon$-nets for each such coordinate axes, we incur an additional factor of $3$ in our solution size.

We now proceed to describe the rule for picking points. In this part, our construction is similar to the one by Kulkarni et al~\cite{KulkarniG10}, so we briefly summarize the common elements and explain the differences in more detail. The main idea is to divide the input points into $2/\epsilon$ horizontal slices each containing $\epsilon n/2$ points. Each slice is numbered $i$, $1\leq i \leq 2/\epsilon$, from bottom to top in terms of the $y$-coordinate to the highest. For each slice, let $P_i$ be the set of points contained in or above slice $i$ and let $H_i$ denote the convex hull of those points. Kulkarni and Govindarajan~\cite{KulkarniG10} associate with $H_i$ an ordering $H'_i$ of the points on its boundary, in counterclockwise direction starting with the point of highest $y$-coordinate. For every point $p \in H'_i$, they define $N(p) \in H'_i$ to represent the point right after $p$ in $H'_i$, with $N(p)$ being the first element of $H'_i$ in the case when $p$ is the last element of the ordering. They then consider the restriction of $H'_i$ to slice $i$ and define $H''_i$ to be the maximal subsequence of $H'_i$ that consists of points on the boundary of $H_i$ that belong to slice $i$. This set is not empty since it must contain the points of lowest $y$-coordinate in $H_i$, which are in slice $i$. Notice that the points in $H''_i$ go in counterclockwise direction, essentially from left to right. The rule for picking points in the $\epsilon$-net is the following: for each slice $i$, pick the point $p_i$ that is the \textit{last} point in $H''_i$ and its corresponding $N(p_i)$. Notice that the two  points essentially correspond to the rightmost vertices in the convex hull whose segment crosses slice $i$. This leads to an $\epsilon$-net of size $4/\epsilon$. At this point, it is straightforward to see that any $\epsilon$-heavy wedge that fully contains slice $i$ and some other slice above it must either contain $p_i$ or $N(p_i)$ or both. We refer the reader to Fig.~\ref{fig:fig3}(a) for an intuitive explanation. In addition, Fig.~\ref{fig:fig3}(b) shows how the argument fails when we consider sectors instead of wedges.

\begin{figure}
\centering
\subfloat[The argument for wedges.]{\includegraphics[height = 2in]{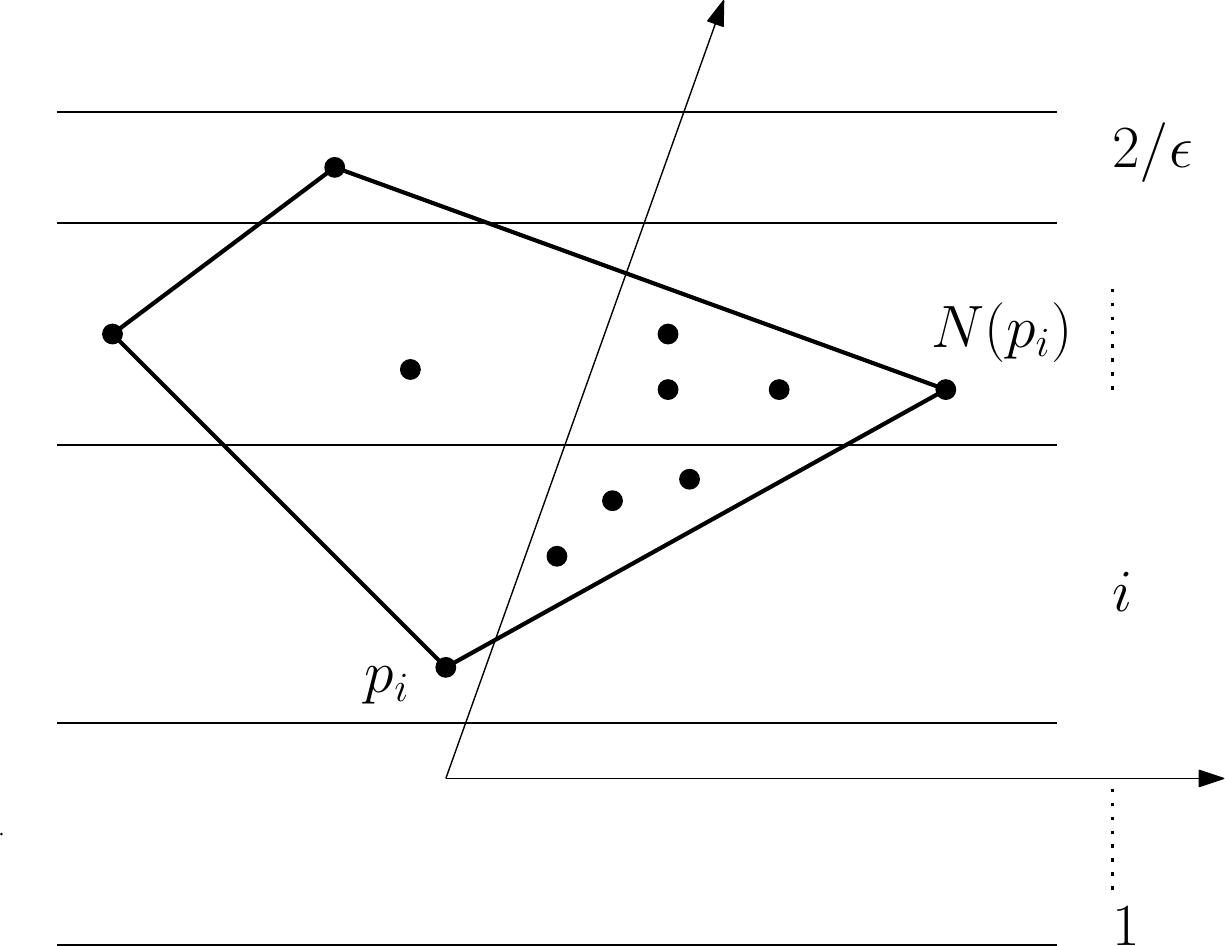}} 
\hspace{.25cm}
\subfloat[The argument for sectors.]{\includegraphics[height = 2in]{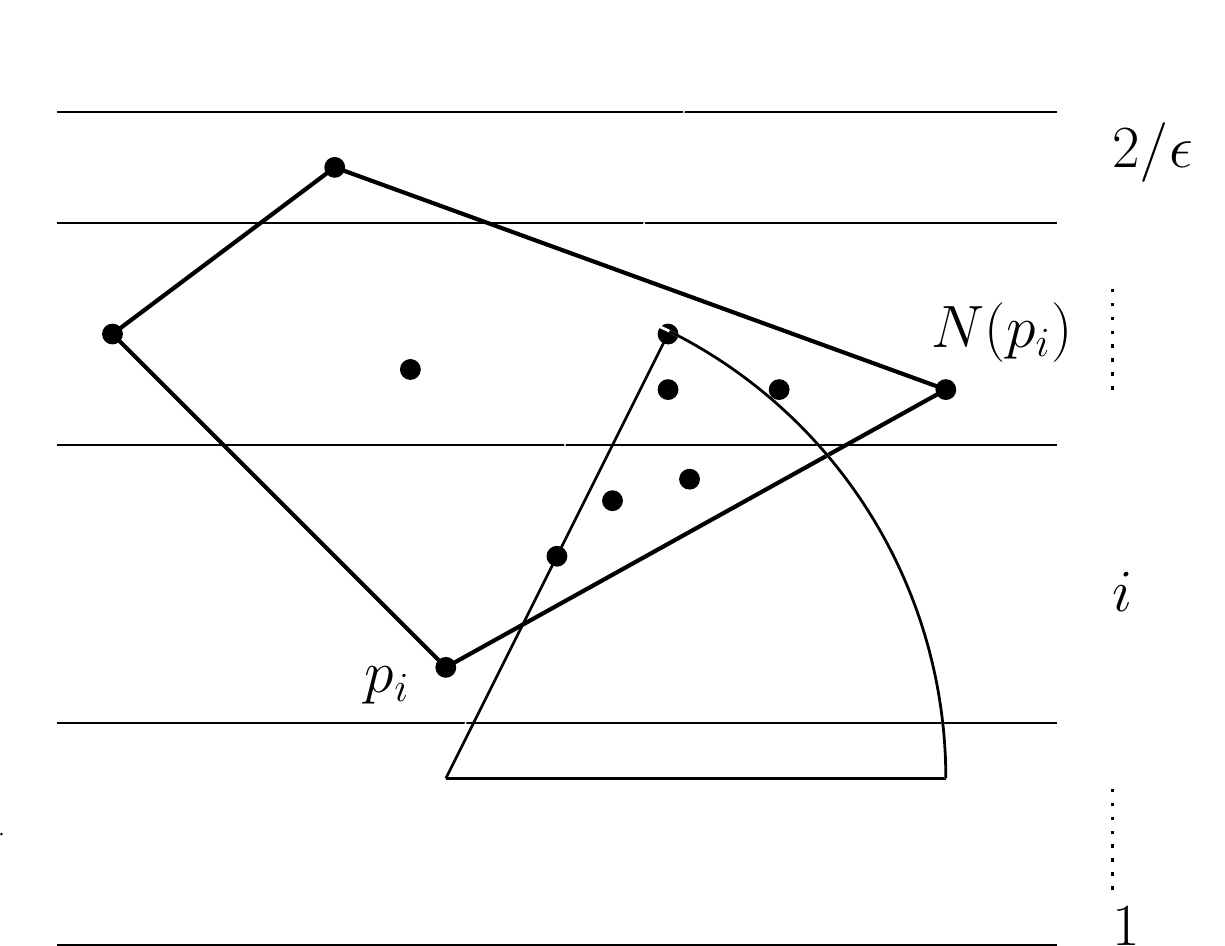}}

\caption{In the case of wedges, the point $N(p_i)$ must be contained in the wedge. However, that is no longer the case for a sector.}
\label{fig:fig3}
\end{figure}

In this context, we modify the above construction in the following ways: first, we will have $4/\epsilon$ horizontal slices each containing $\epsilon n/4$ points each. We will further divide the input space into \textit{vertical strips} of width $R$ each.  The horizontal slices will be numbered $i$,$1 \leq i \leq \lceil 4\epsilon \rceil$. The vertical strips will be numbered $j$, $1\leq j\leq \lceil \frac{x_l - x_r}{R} \rceil $, where $x_l$ and $x_r$ represent the leftmost and rightmost $x$-coordinates of the input points. Each slice $i$ and strip $j$ therefore define the block $B_{ij}$ and let $P_{ij}$ denote the points contained in all the blocks $B_{i'j}$ with $i' \geq i$. In other words, $P_{ij}$ contains all the points in or above slice $i$ restricted to strip $j$. 

Let $H_{ij}$ denote the convex hull of the points in $P_{ij}$, for all $P_{ij} \neq \emptyset$ and, just like before, let $H'_{ij}$  be an ordering of all the points on its boundary, in counterclockwise direction starting with the topmost point. Similarly, let $H''_{ij}$ represent the sequence $H'_{ij}$ restricted to block $B_{ij}$. Notice that $H''_{ij}$ could be empty in the situation in which $B_{ij}$ contains no points. However, we will never deal with such blocks in our proof. Our rule for picking points will still pick $p_{ij}$ and $N(p_{ij})$, where $p_{ij}$ is the last element in $H''_{ij}$ and $N(p_{ij})$ is defined as before. In addition, we also pick the leftmost and rightmost points in each block $B_{ij}$.  The size of the final set $H$ will be $4 \cdot \lceil \frac{4}{\epsilon} \rceil \cdot \lceil \frac{x_l - x_r}{R} \rceil$. 

We will now consider an $\epsilon$-heavy axis parallel sector $A$ and show that $H$ will intersect $A$'s extended sector of radius $3R$. We will focus on the case in which the central angle $\theta$ of $A$ is $\leq \pi/3$. We first begin by noticing that, since $A$ is a sector of radius $R$, it intersects at most $2$ consecutive horizontal strips. In particular, we will have one of two cases, as seen in Fig. \ref{fig:fig5}. 

\begin{figure}
\centering
\subfloat[Case 1.]{\includegraphics[height = 2in]{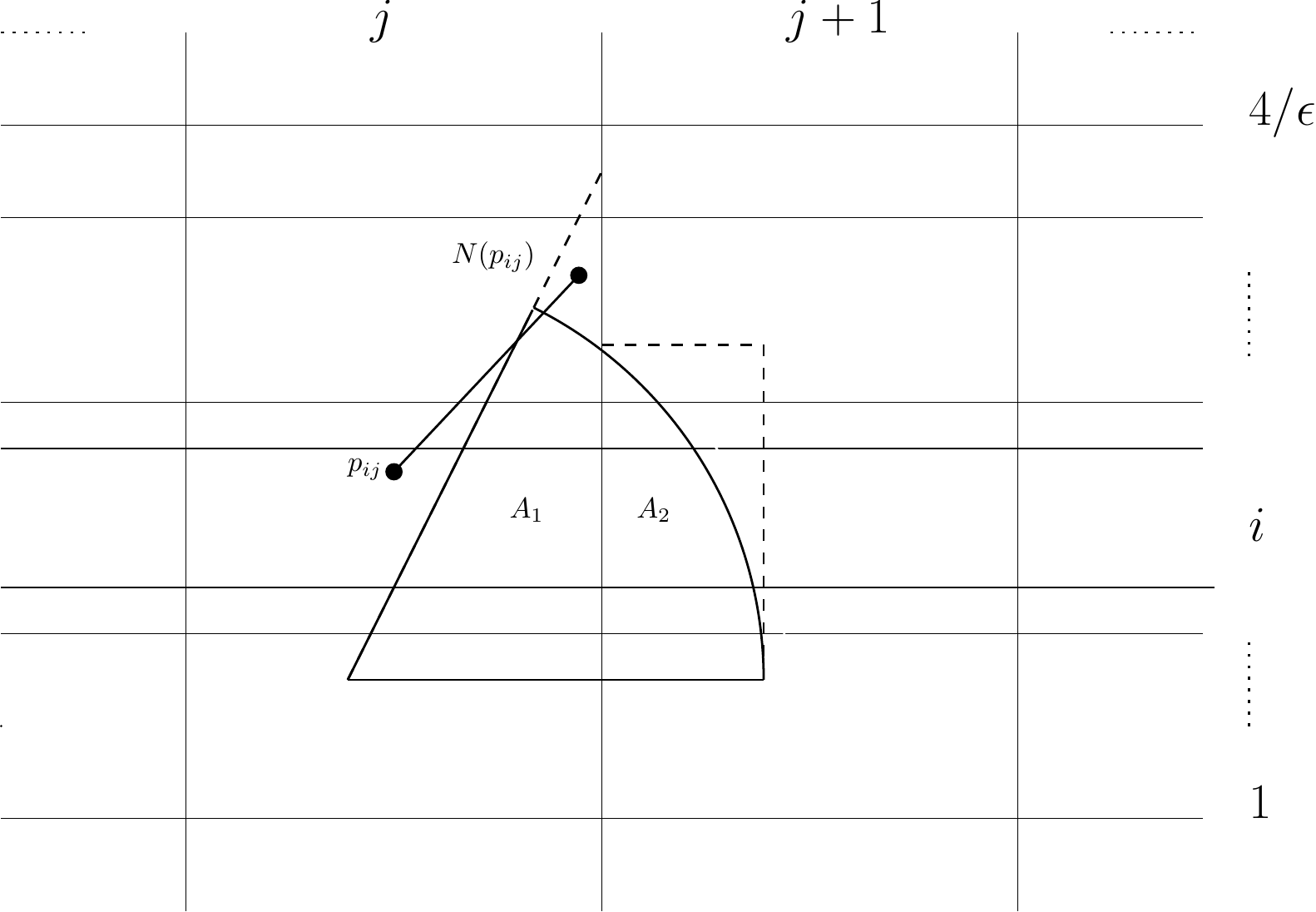}} 
%\hspace{0.03cm}
\subfloat[Case 2.]{\includegraphics[height = 2in]{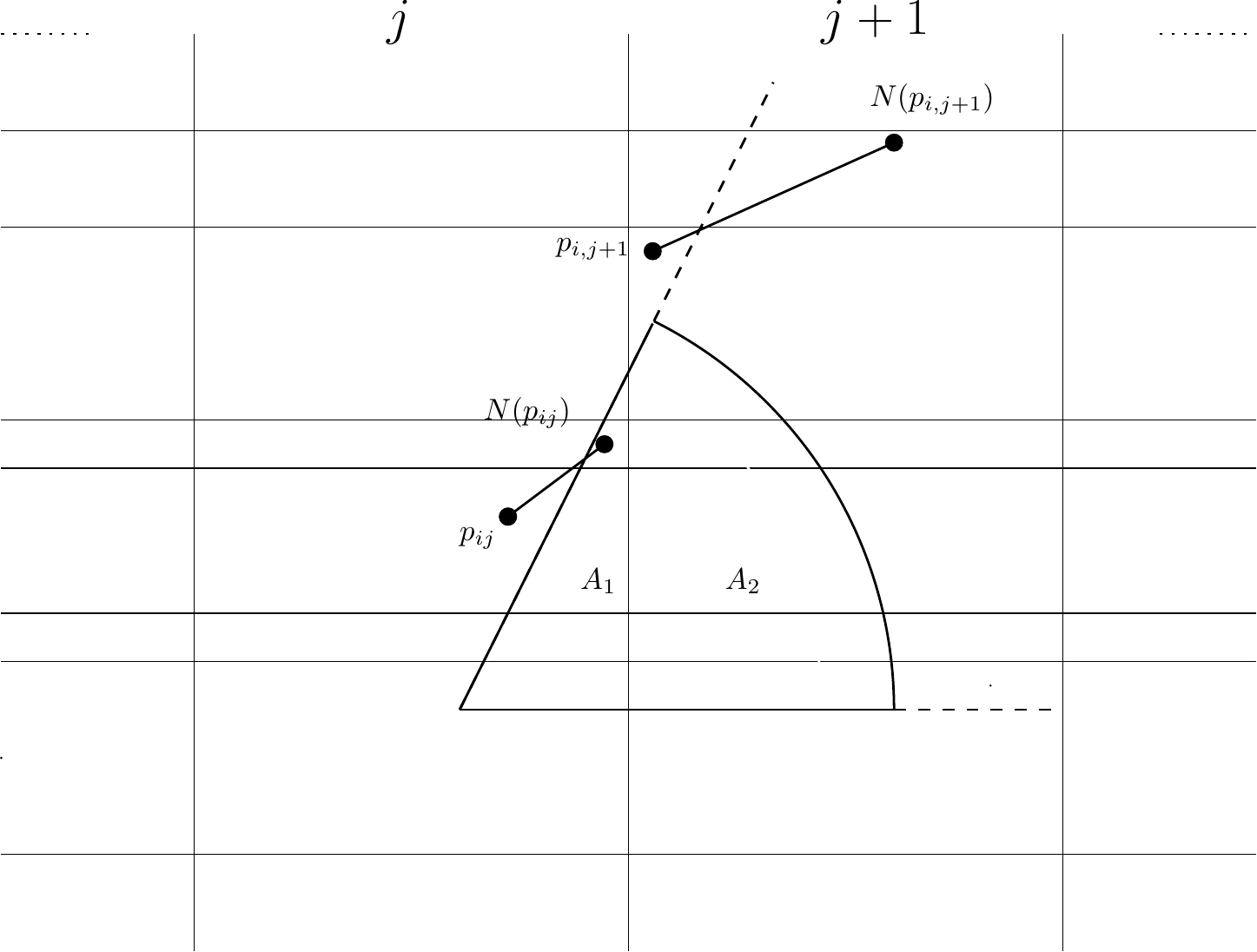}}

\caption{Within each block, $A$ behaves almost like a wedge.}
\label{fig:fig5}
\end{figure}

 \textbf{Case 1.} The vertical line separating strips $j$ and $j+1$ intersects the arc of $A$, as seen in Fig. \ref{fig:fig5}(a). In this case, $A$ decomposes into a left component $A_1$ and a right component $A_2$. One of these two must contain at least $\epsilon n/2$ points. Suppose it is the left component $A_1$. Since each horizontal slice contains exactly $\epsilon n/4$ points,  $A_1$ must intersect at least $2$ horizontal slices (not necessarily consecutive). Let $j$ be the minimum index strip that has a non empty intersection with $A_1$. If $p_{ij}$ is contained in $A_1$, we are in good shape. Otherwise, $N(p_{ij})$ must be contained in the space we get by extending the non horizontal side of $A$ until it meets the closest vertical strip to the right. This is because, otherwise, th convex hull $H_i$ would not cover the points in $A_1$ that are contained in slice $i$ and all the other slices that are above $i$ (of which there is at least one). The farthest possible location for $N(p_{ij})$ is exactly the intersection point with the vertical line. In this situation, the distance from the target is at most $R /\cos \theta)$ and since $\theta \leq \pi/3$, we get that the maximum distance to the target is $2R$.
 
 When the right component $A_2$ has more than $\epsilon n/2$ points, we note that it must also intersect at least $2$ horizontal slices. In this case, we consider the leftmost point in slice $i$. It must be contained in the smallest axis parallel rectangle that encloses $A_2$. The farthest point in this scenario is the upper right corner, which is at most $R \sqrt{1+\sin^2\theta} \leq R\sqrt{2}$  away from the target. 
 
\textbf{Case 2.} The vertical line separating strips $j$ and $j+1$ intersect a side of the sector, as seen in Fig. \ref{fig:fig5}(b). We again decompose into the right and left components $A_1$ and $A_2$ respectively. If $A_1$ contains more than $\epsilon n/2$ points, we are guaranteed that either $p_{ij}$ of $N_{p_{ij}}$ are contained in it, by a similar argument as above. If, on the other hand, $A_2$ contains more than $\epsilon n/2$ points, in the case in which $p_{i,j+1}$ is not contained in it, then $N(p_{i,j+1})$ will be contained in the extended object that we get from intersecting the non-parallel side of $A$ with the closest vertical line to the right. The farthest point in this object is at at most $R (1+1/\cos \theta) \leq 3R$ away from the target.

The other case we need to consider is when $A$ has a central angle of $\pi/2$. In that case, we extend it to the smallest enclosing square (of side length $R$). When the square intersects the vertical line between strips $j$ and $j+1$, separating the square into $A_1$ and $A_2$, we will have that either $A_1$ will contain the rightmost point in box $B_{ij}$ or $A_2$ will contain the leftmost point in $B_{i,j+1}$. The farthest point from the target is the top right corner of the square and is $R\sqrt{2}$ away from the target. 

We have therefore shown that our construction will always contain points that are close to the targets that correspond to the heavy sectors. Additionally, we must note that all of these points are not just a distance at most $3R$ away from the target: they are also contained in $A$'s extended sector of radius $3R$. In other words, we are guaranteed that the initial angle constraint that $A$ imposes is not violated.

Finally, we must note that, in this specific construction, we consider the horizontal width $W_x = x_r-x_l$ because we are dealing with a specific kind of axis parallel sectors given by a specific choice for our coordinate axes. For the other types of axis parallel sectors, the factor will depend on $W_y = y_t - y_b$, where $y_t$ and $y_b$ are the $y$-coordinates of the top and bottom vertices in the input space. We get that, for a fixed coordinate system, the size of the $\epsilon$-net will be at most $\bigo{1} \cdot \lceil \frac{1}{\epsilon} \rceil \cdot \lceil \frac{W}{R} \rceil$, where $W = \max \{W_x,W_y\}$. Combining this with the other constructions that correspond to the rotated systems of coordinate axes, we get that the size of the $\epsilon$-net will be at most  $\bigo{1} \cdot \lceil \frac{1}{\epsilon} \rceil \cdot \lceil \frac{R_I}{R} \rceil$, where $R_I$ is the diameter of the smallest enclosing ball of all the input points.

\end{proof}

\textbf{Remark $1$.} An alternative way of thinking about the above construction is that by relaxing the radius constraint to $3R$, we can include our double sectors in slightly larger geometric objects for which we construct an $\epsilon$-net of size $\bigo{\frac{R_I}{R} \cdot \frac{1}{\epsilon}}$. The way we extend each sector is deterministic and depends on the choice of coordinate axes and the corresponding type of axis-parallel sector. We note, however, that this is not the same as constructing an $\epsilon$-net for the extended sectors of radius $3R$. Each of our objects are subsets of the extended sector of radius $3R$ but their construction relies crucially on the fact that we were extending from sectors of radius $R$. 

\textbf{Remark $2$.} We note that our choice for the threshold of $\pi/3$ for the central angle of an axis parallel sector was rather arbitrary. As we have seen, it was used in bounding the distance from the farthest point of the $\epsilon$-net guaranteed to be in the extended sectors to the target at the center of the sector. Smaller distance guarantees can be obtained by considering smaller thresholds at the expense of requiring more rotated copies of systems of coordinate axes (which in turn will increase the approximation factor in the overall $\epsilon$-net size). 

We will now show how to use this construction in order to build a set $S'$ that is within a constant of the size of the optimal hitting set and guarantees that the points we pick are contained in the extended sectors of radius $3R$. Specifically, we will show the following:
 
\begin{lemma} 
Let $\epsilon>0$ be such that $\alpha-\epsilon \leq \pi/3$ and $\epsilon \leq \alpha/2$. Let $S \subseteq X$ be a set that $(\alpha-2\epsilon$)- covers $T$ within distance $R' \geq 0$. Then we can find a set $S' \subseteq X$ such that $S'$ $(\alpha-\epsilon)$-covers $T$ within distance $\max \{3R,R'\}$ and $|S'| = |S| + \bigo{1}\cdot \OPTsize$.
\end{lemma}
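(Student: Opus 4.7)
The plan is to instantiate the framework of Section~\ref{co} using the ``relaxed'' $\epsilon$-net construction of Appendix~\ref{boundedeps}, combined with the Hochbaum--Maass shifting technique~\cite{Hochbaum:1985:ASC:2455.214106} to eliminate the undesirable $R_I/R$ factor that the construction carries. First, I will invoke Lemma~\ref{existence}: for each target $t \in T$ that is not already $\aec$-covered by $S$, there is a $\dw$ $D_t$ centered at $t$ (with central angle at least $\pi-\alpha$) that must contain a sensor of $\OPT$. Since any sensor paired with one in $S$ must lie within distance $R$ of $t$ to witness $(\alpha-\epsilon)$-coverage within distance $R$, intersecting with $\mathcal{D}(t,R)$ yields a $\db$ $C_t$, and the set system $\mathcal{F}(X, \{C_t \cap X\})$ admits $\OPT$ as a hitting set of size at most $\OPTsize$. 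It suffices to compute a hitting set $S''$ of size $\bigo{\OPTsize}$ whose elements lie inside the extended $\db$s $C'_t = D_t \cap \mathcal{D}(t,3R)$; by the pairing argument of Lemmas~\ref{red_a}--\ref{red_al}, the resulting $S' = S \cup S''$ then $\aec$-covers every previously uncovered target at distance at most $3R$, while preserving the coverage that $S$ already provided within distance $R'$.

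Second, by splitting each $\db$ into its two constituent sectors (at the cost of halving $\epsilon$), the construction in Appendix~\ref{boundedeps} produces an $\epsilon$-net of size $\bigo{(R_I/R)\cdot(1/\epsilon)}$ whose points are guaranteed to lie inside the $3R$-extended sectors. Feeding this net into Br\"onnimann--Goodrich~\cite{BG} with $g(\tau) = \bigo{R_I/R}$ yields a hitting set of size $\bigo{(R_I/R)\cdot \OPTsize}$, which is too expensive whenever $R_I \gg R$. To remove this factor, I overlay an axis-aligned grid with cell side $\ell R$ for a constant $\ell$ chosen large enough that any $\db$ of radius $R$ fits inside a single cell under at least one of the $\bigo{\ell^2}$ shifts. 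Within a single cell the spatial extent is $\bigo{R}$, so the Appendix~\ref{boundedeps} bound collapses to $\bigo{1/\epsilon}$; invoking Br\"onnimann--Goodrich independently inside each cell gives per-cell hitting sets of total size $\bigo{\sum_c \tau_c} = \bigo{\OPTsize}$, where $\tau_c$ is the local optimum (bounded above by the number of $\OPT$ sensors falling in that cell). Averaging the cost over the $\bigo{1}$ shifts and unioning the per-shift solutions produces $S''$ of size $\bigo{\OPTsize}$ that hits the $\db$ of every target not already $\aec$-covered by $S$.

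The main obstacle will be the careful execution of the shifting argument, which the analysis in Appendix~\ref{boundedeps} largely sets up. Specifically, one must verify that each target's $\db$ lies interior to some cell in at least one shift (so it is genuinely handled there), that the per-cell invocation of Appendix~\ref{boundedeps} continues to guarantee that each selected sensor lies inside the $3R$-extended $\db$ of the target it is charged with hitting even when the target sits near a cell boundary (so the distance slack $3R$ is never violated), and that unioning (rather than taking the best of) the per-shift solutions only loses a multiplicative constant. With these in place, $S' = S \cup S''$ satisfies $|S'| \leq |S| + \bigo{1}\cdot \OPTsize$ and $\aec$-covers $T$ within distance $\max\{3R, R'\}$, as claimed.
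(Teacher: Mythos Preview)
Your proposal is correct and follows essentially the same route as the paper: invoke Lemma~\ref{existence} to reduce to a hitting-set instance over the $\db$s $C_t$, feed the relaxed $\epsilon$-net of Appendix~\ref{boundedeps} (size $\bigo{(R_I/R)\cdot(1/\epsilon)}$, points guaranteed inside the $3R$-extended sectors) into Br\"onnimann--Goodrich, and use Hochbaum--Maass shifting to localize so that $R_I/R=\bigo{1}$ on each subinstance. The only noteworthy difference is organizational: the paper runs shifting as two nested one-dimensional passes (vertical strips of width $l\cdot 6R$, then horizontal, assigning every range to the strip containing its center and drawing sensors from a padded strip, taking the \emph{minimum} over the $l$ shifts via the standard Hochbaum--Maass accounting), whereas you use a single $2$D grid, handle in shift $s$ only those targets whose $C_t$ lies entirely in one cell, and \emph{union} the per-shift solutions over the $\bigo{\ell^2}$ shifts. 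Both are valid and yield $|S''|=\bigo{\OPTsize}$; your variant trades a worse hidden constant for a shorter charging argument (each shift costs at most $\sum_c|\OPT\cap c|=\OPTsize$ since $C_t\subseteq c$ forces the $\OPT$ sensor hitting $C_t$ to lie in $c$).
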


\begin{proof}
As we have seen before, we can construct $S'$ as a hitting set for a new set of geometric objects that we obtained by extending each double sector. Specifically, given the set system $\mathcal{F'}(X, \mathcal{R'})$ containing sectors of radius $R$, we constructed a new set system $\mathcal{F}'_{\text{new}}(X, \mathcal{R}'_{\text{new}})$ for which we gave an $\epsilon$-net construction of size $\bigo{\frac{R_I}{R} \cdot \frac{1}{\epsilon}}$. First notice that the size $\tau^*_{\text{new}}$ of the optimal hitting set for $\mathcal{F}'_{\text{new}}$ is upper bounded by the size $\tau^*$ of the optimal hitting set for $\mathcal{F'}$. A straightforward application of Br\"{o}nnimann and Goodrich~\cite{BG} would therefore get a set $S'$ with $|S'| = \bigo{\frac{R_I}{R} \cdot \tau^*}$.

In order to get a constant approximation factor instead, we will employ the shifting technique of Hochbaum and Maass~\cite{Hochbaum:1985:ASC:2455.214106} as it applies to our case. Before we describe its application, however, we first note that we will construct $S'$ by first restricting ourselves to each of the systems of coordinate axes that we considered in the previous construction. Each sector in $\mathcal{F'}$ corresponds to exactly one coordinate system, specifically the one in which it has the desired property that it decomposes into at most $3$ smaller sectors that have a central angle that is either $\leq \pi/3$ or exactly $\pi/2$. Each coordinate system $i$ therefore corresponds to a different set of ranges $\mathcal{F}'_{\text{new}, i}(X, \mathcal{R}'_{\text{new}, i})$ that are extensions of the sectors of radius $R$ affiliated with that coordinate system. In other words, 
\begin{center}
 $\mathcal{R}'_{\text{new}} = \mathcal{R}'_{\text{new}, 1} \cup \mathcal{R}'_{\text{new}, 2} \cup \mathcal{R}'_{\text{new}, 3}$.
\end{center} 

As such, we will construct $S'$ by actually constructing three separate hitting sets which we will denote as $S'_1$, $S'_2$ and $S'_3$ and then taking their union.  Let $\tau^*_{\text{new},1}$, $\tau^*_{\text{new},2}$ and $\tau^*_{\text{new},3}$ be the size of the optimal hitting set in each of these set systems. It follows that $\tau^*_{\text{new},1} \leq \tau^*_{\text{new}}$ etc. We will now construct hitting sets such that each $|S'_i|$ is within a constant fraction of  $\tau^*_{\text{new},i}$.

We now fix a coordinate system and notice that the appropriate $\epsilon$-net construction from before has size $\bigo{1} \cdot \lceil \frac{1}{\epsilon} \rceil \cdot \lceil \frac{W}{R} \rceil$, where $W = \max \{W_x,W_y\}$. In other words, for a set of points and ranges that are all contained inside a space with $W \leq k \cdot R$, we would get a hitting set of size at most $\bigo{k}$ times  the size of the optimal hitting set for those points. We will denote this local algorithm as $A$.
 
The shifting technique of Hochbaum and Maass~\cite{Hochbaum:1985:ASC:2455.214106} provides a way of analyzing the global approximation factor we would get by applying algorithm $A$ on instances of bounded $W$ and returning the union of all the hitting sets we compute. We will not reproduce the argument entirely in this paper, only mention how it applies to our case. We first need to define the appropriate partition into smaller instances. Suppose we first divide $X$ into vertical strips of width $l \cdot 6R$, where $l \geq 2$ is an arbitrary parameter.  We now need to decide how we will appropriately partition the ranges in $\mathcal{R}'_{\text{new}, i}$. If a range is completely contained in a strip, we assign it to it. Otherwise, assign the range to the strip that contains its center (the target that is the center of the sector). Each such range is contained in a double sector of radius $3R$ which can intersect at most one additional strip.  This is what motivates us to consider strips of size $ l \cdot 6R$ instead of $l \cdot 2R$. For each such strip, we employ an algorithm $A'$ (to be defined later) that will take as input the ranges associated with that strip and the set of sensors contained in the enlarged strip of side length $ (l+1)\cdot 3R$ (i.e. the original strip padded with $3R$ on the left and right).

Now we can see how the global optimal hitting set behaves with respect to the localized optimal hitting sets in each of the strips. We notice that a point in the global optimum can appear at most twice in the induced hitting sets for two consecutive strips. Moreover, when we consider the other possible partitions into strips of width $l \dot 6R$ (obtained by shifting a strip to the right by $6R$), the sets of points that get double counted in each of the partitions are disjoint. This is because, if a point in the optimal global hitting set gets double counted in one partition, it will never get double counted again in any of the shifted $l-1$ partitions since the ranges that contribute to it being double counted are no longer in two distinct strips. Therefore, the analysis of Hochbaum and Maass~\cite{Hochbaum:1985:ASC:2455.214106} applies and we get that, if the algorithm $A'$ solves the hitting set problem for its input strip within a factor of $\alpha_l$, we are guaranteed to obtain an overall approximation guarantee of $\alpha_l \cdot (1+\frac{1}{l})$ (by taking the minimum hitting set over all possible $l$  partitions). The way we are going to determine $\alpha_l$ is by reproducing the above argument and splitting each horizontal strip of width $(l+1)\cdot 6R$ into vertical strips of length $l \cdot 6R$. At this point, we can use algorithm $A$ and give to it a set of points and ranges that are fully contained in a square of side length $(l+1)\cdot 6R$. At this point, $A$ will return a hitting set that is within a factor of $\bigo{l}$ of the optimum. This, in turn, will mean that the approximation factor for $A'$, $\alpha_l$ is also at most $\bigo{l}$. By choosing $l=2$, we get a constant factor of approximation overall. The actual constants depend directly on the $\epsilon$-net constructions used and an improvement in that direction would directly translate in an improvement for the hitting set algorithm.

\end{proof}

\section{($1+\sqrt{3})$-approximation for Euclidean \textsc{Fault tolerant $k$-suppliers}}
\label{app1}
One can think of $\Pror$ as a clustering problem with angle constraints. In this context, a slightly different but relevant problem is the  $\textsc{fault tolerant}-k-\textsc{suppliers}$ problem, as defined by Khuller et al.~\cite{Khuller}. Before we define the $\textsc{fault tolerant}-k-\textsc{suppliers}$ problem, we would like to first talk about the \textsc{$(k,r)$-center} problem as defined by Bar-Iral et al~\cite{BarIlanKP93}, since it captures nicely the bi-criteria nature of our problem. In this problem, we are given a set of $n$ input points and the goal is to choose $k$ points (centers) out of the input points such that every point is within distance $r$ of at least some chosen center. The underlying structure is a graph and the distance is a given function on pairs of vertices (metric or not). There are two ways to attack this problem, each of which gives rise to different optimization problems. On one hand, one could fix the number of centers to be $k$ and then focus on minimizing $r$: this is the well-known \textsc{$k$-center} problem. On the other hand, one could fix the radius $r$ and consider minimizing $k$: this then becomes the \textsc{$r$-domination} problem \cite{BarIlanKP93}. 

The \textsc{$(k,r)$-suppliers} problem is similar to \textsc{$(k,r)$-center}. The main difference is that  the set of centers must be picked from a separate set. Formally, we are given a bipartite graph $G=(U,V,E)$ in which $U$ is the set of suppliers and $V$ is the set of clients, and a distance function $d : U \times V  \rightarrow \mathbb{R}^{\geq 0}$. In the \textsc{$k$-suppliers} problem, the objective is to find a subset $S \subseteq U$ of suppliers of cardinality $k$ such that all the clients in $V$ are within radius $r$ of a center. Specifically, we desire a set $S$ that satisfies
 ${\min_{S \subseteq U, |S| \leq k} \max_{v \in V} \min_{u\ \in S} d(v,u)}$. The analogue \textsc{suppliers $r$-domination} problem requires a set $S$ of minimum size such that ${\max_{v \in V} \min_{u\ \in S} d(v,u) \leq r}$. An interesting variation of the \textsc{$(k,r)$-suppliers} is the \textsc{fault tolerant $(k,r)$-suppliers} problem, in which we require a client to be close to at least $\delta$ of the suppliers, for a given parameter $\delta \leq k$. Specifically we define
 $d^{(\delta)} (v,S) = {\min_{A \subseteq S, |A|= \delta} \max_{u \in A} d(v,u)}$. Then the $\delta\textsc{-neighbors } k\textsc{-suppliers}$ problem (or alternatively, the  $\textsc{fault tolerant}-k-\textsc{suppliers}$ problem with parameter $\delta$) requires us to find a set $S \subseteq U$ of cardinality $k$ such that $S$ satisfies
 ${\min_{S \subseteq U, |S| \leq k} \max_{v \in V} d^{(\delta)}(v, S)}$.Conversely, the $\delta\textsc{-neighbors suppliers } r\textsc{-domination}$ problem requires us to find a set $S \subseteq U$ of minimum cardinality such that ${ \max_{v \in V} d^{(\delta)}(v, S)} \leq r$. Then the  $\delta\textsc{-neighbors suppliers } r\textsc{-domination}$ problem with $\delta=2$ and $r=R$ is exactly $\Pror$ with $\alpha=0$.
 
 Before we begin to describe the $(1+\sqrt{3})$-approximation for $k$-suppliers, we would like to discuss the reason why classical techniques for $k$-center do not work for the case of general $\alpha$. In general, techniques for $k\textsc{-center}$ provide approximation guarantees by first showing a lower bound on the optimal solution size. In particular, they are based on the observation that if two targets are far apart, they must be assigned to different sensors. Specifically, the size of any maximal set of such far apart targets is a good lower bound for the size of the optimal set of sensors. In other words, if we pick one sensor for each such far away target, we are guaranteed to never pick more sensors than the optimal solution would. The algorithm then proceeds to focus on such far apart targets and assigns a sensor to each of them optimally. The rest of the targets must be, by definition, close to one of the far apart targets and get assigned to whatever center is closest to the latter. The cost of focusing on these far apart targets is that the rest of the targets has to now travel farther than what they would have had to travel in the optimal solution. This approach gives the best approximation possible unless P=NP. 
 
 In the fault tolerant case, when the sensors collaborate to cover the targets, a similar approach can be used successfully. Furthermore, in the case of $\Pror$, the observations from before also hold. The problem, however, comes from the fact that we can no longer make the argument that a target can be covered by any sensor at the price of paying for a larger distance. The choice of sensors that we make for the far apart targets might be optimal for those specific targets, but we cannot guarantee that they will help $\alpha$-cover any of the other targets.  When the sensors can be placed anywhere, we can get around this problem by adding only a small constant number of new sensors (per each far apart target) that we can guarantee will cover the rest of the targets, wherever they might be.  In contrast, when the sensors are restricted to only certain locations, we can no longer reason about which of the targets they will be able to cover. In general, each of the other targets might require two additional new sensors to be chosen in our solution. This, in turn, makes it hard to lower bound the size of the optimal solution and provide guarantees on the number of sensors the algorithm would choose. Another way of thinking about this issue is that, in $\Pror$, the relative position of the sensors with respect to the target they are supposed to cover is encoded as a constraint rather than a parameter to be optimized (such as distance is in $k$-\textsc{center}). In this context, an approximation algorithm that relaxes that constraint would have to exploit its inherent geometric structure so that it can cover the rest of the targets in a way that can be translated quantitatively as a function of the size of the optimal solution.

In the $(1+\sqrt{3})$-approximation for $k$-suppliers, the key observation made by Nagarajan et al~\cite{euclideanksupplier} is the following:
\begin{lemma} (Nagarajan et al~\cite{euclideanksupplier}) If three clients have pairwise distances strictly greater than $\sqrt{3} \cdot R$, then they cannot be covered within distance $R$ by the same supplier.
\end{lemma}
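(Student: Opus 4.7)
The plan is to argue by contradiction using the Gram matrix of the three client-to-supplier vectors. Suppose a supplier $s$ covers all three clients $c_1, c_2, c_3$ within distance $R$, yet $\|c_i - c_j\| > \sqrt{3}\cdot R$ for every pair. I would place $s$ at the origin and set $v_i = c_i - s$, so that $\|v_i\| \leq R$ for each $i$. The aim is to derive a contradiction that holds in every dimension (as required by Theorem \ref{ksupplierseuc}), which rules out purely 2D angular arguments as the cleanest route.

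The key step is to translate each pairwise distance bound into an inner-product bound. From the identity $\|v_i - v_j\|^2 = \|v_i\|^2 + \|v_j\|^2 - 2\langle v_i, v_j\rangle$ together with $\|v_i\|,\|v_j\| \leq R$ and the assumption $\|v_i - v_j\|^2 > 3R^2$, I obtain
\[
\langle v_i, v_j\rangle \;<\; \frac{\|v_i\|^2 + \|v_j\|^2 - 3R^2}{2} \;\leq\; \frac{2R^2 - 3R^2}{2} \;=\; -\frac{R^2}{2},
\]
strictly, for each pair $i \neq j$. Next, I invoke nonnegativity of the squared norm of the sum:
\[
0 \;\leq\; \|v_1 + v_2 + v_3\|^2 \;=\; \sum_{i=1}^{3} \|v_i\|^2 + 2\sum_{i<j} \langle v_i, v_j\rangle \;<\; 3R^2 + 2 \cdot 3 \cdot \left(-\frac{R^2}{2}\right) \;=\; 0,
\]
a contradiction. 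Hence at least one pair of clients must be within distance $\sqrt{3}\cdot R$ of each other.

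The only subtle point, and what I would call out carefully, is that the strict inequality on pairwise distances is essential: it propagates to strict inequality in the inner-product bounds and finally to $\|v_1+v_2+v_3\|^2 < 0$. With non-strict hypotheses the argument would only yield $\|v_1+v_2+v_3\|^2 \leq 0$, i.e., equality, which is in fact realized by an equilateral triangle of side $\sqrt{3}\cdot R$ inscribed in the disk of radius $R$ with $s$ at its center; this shows the bound $\sqrt{3}$ is tight. No further obstacle is anticipated, since the argument is dimension-free and purely algebraic.
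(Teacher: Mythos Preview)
Your argument is correct. Note, however, that the paper does not supply its own proof of this lemma: it is stated in Appendix~\ref{app1} as a result quoted from Nagarajan et al.\ and then used as a black box, so there is no in-paper proof to compare against. Your Gram-matrix computation via $\|v_1+v_2+v_3\|^2\ge 0$ is a clean, dimension-free proof of exactly the statement needed, and your remark that the equilateral triangle of side $\sqrt{3}\,R$ inscribed in the radius-$R$ ball witnesses tightness is also correct.
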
 

Notice that this is a more advanced observation that the classical one we have mentioned before. In that observation, only two far apart targets were considered at each time (i.e. with distance greater than $2R$) with the guarantee that they cannot share the same vertex. In the Euclidean context, this observation can be extended to three far apart targets as long as we require that the pairwise distance between them is greater than $\sqrt{3}R$. This is where the particular structure of the Euclidean space comes into place, since, in general metric spaces, this observation does not hold true.

The authors hence restrict their attention to a maximal set $P$ of clients that have pairwise distances $> \sqrt{3} \cdot R$. They construct the graph $G$ that has $P$ as the vertex set and an edge between two clients in $P$ if they can be covered within distance $R$ by the same supplier in $U$. Since any supplier can serve at most two clients in $P$, there exists a one-to-one correspondence between suppliers and edges in $G$ and the problem of clustering the clients in $P$ becomes equivalent to finding an minimum size edge cover of $G$. Once they compute a set $S \subseteq U$ of suppliers that cover the clients in $P$ within distance $R$, all the other clients are within distance $\leq \sqrt{3}$ away from $P$ and hence, within distance $\leq (1+\sqrt{3})$ away from $S$.

In the case of $\ft$, the same observation applies and hence, we still have a one-to-one correspondence between edge in $G$ and suppliers in $U$. The structure of the optimal solution, however, is different. It corresponds to a \textsc{simple $b$-edge cover}: a subset of edges such that each vertex $v\in P$ is incident on at least $b_v$ edges, for all $v\in P$. In our case, $b_v = \delta$. It is known that computing a minimum size \textsc{simple $b$-edge cover} problem can be done in polynomial time by computing a maximum size \textsc{$b$-matching} \cite{CombOpt}. The latter can be solved in time $\bigo{n^2\log n(m+n \log n)}$, where $n$ is the number of vertices and $m$ is the number of edges \cite{Anstee1987153}. The analysis that obtains a $(1+\sqrt{3})$-approximation remains the same.

\section{\textsc{Pairwise Selection}}
\label{pairwise_selection}

In this section, we would like to provide a more general problem formulation that captures the inherent difficulties associated with satisfying angular constraints. In particular, we will model the observation that a target is not covered unless both sensors that $\alpha$-cover it are selected. In other words, while the objective function keeps track of the number of sensors we select, the specific constrains to be satisfied depend on selecting pairs of sensors that can then $\alpha$-cover the targets.

In this context, consider the graph $G = (V,E)$ defined on the set of possible sensor locations (i.e. with $V = X$) having an edge between two sensors $s,s' \in X$ if and only if the pair $(s,s')$ $\alpha$-covers some target in $T$. For each target $t \in T$, we look at all the pairs of sensors that $\alpha$-cover it. By definition, this will be a subset $E_t$ of the edges in $E$.  The task of picking the smallest number of sensors that $\alpha$-cover $T$ then becomes equivalent to picking the smallest set of vertices $S \subseteq V$ with the property that, for each target $t \in T$, there exist two sensors $s,s' \in S$ such that $(s,s') \in E_t$. In other words, for each possible target $t$, there exist a pair of sensors in $S$ that $\alpha$-cover it.

 We formalize this as the \textsc{Pairwise Selection} problem:

 \begin{center}
  \noindent\framebox{\begin{minipage}{5.5in}
  \textbf{\textsc{Pairwise Selection}}\\
  \emph{Input }: A graph $G = (V,E)$ and a collection $E_1,E_2,\ldots,E_l$ of $l$ subsets of edges, $E_i \subseteq E$, for all $i \in \overline{1,l}$. \\
  \emph{Output} : A set $S \subseteq V$ of mininum cardinality such that $E[S] \cap E_i \neq \emptyset$, for all $i \in \overline{1,l}$.
  \end{minipage}}
  \end{center}
  \medskip
  
By $E[S]$ we mean the edges induced by the subset $S$, i.e. an edge $e = (u,v) $ is in $E[S]$ if both of its endpoints are in $S$, $u,v \in S$. We now show that the \textsc{Min Rep} problem is a special case of \pairS. In \MinRep, we are given a bipartite graph $G=(A,B,E)$, where $|A|=|B|=n$. Each of the sets $A$ and $B$ are partitioned in $k$ sets of size $q = \frac{n}{k}$ each, $\mathcal{A} = \{A_i | i\in\{1,\ldots,k\}\}$ and $\mathcal{B}=\{B_i | i \in \{1,\ldots, k\}\}$. Given this, we form a the bipartite \textit{supergraph} $H$ in which the vertices represent the sets $A_i$ and $B_i$. Vertices $A_i$ and $B_j$ are connected by a \textit{superedge} if there exist elements $a_i \in A_i$ and $b_j \in B_j $ that are adjacent in $G$. In this situation, we say that the edge $(a_i,b_j)$ \textit{covers} the superedge $(A_i,B_j)$. The $\MinRep$ problem asks for the set $S= A' \cup B'$ of minimum size such that the pairs $(a',b'), a' \in A', b' \in B'$ cover all the superedges of $H$.

Any such instance of $\MinRep$ can be transformed into an instance of $\pairS$ in the following way: for each superedge $(A_i,B_j)$ define the set 
\begin{center}
 $E_{i,j} = \{ (a,b) | a\in A_i, b \in B_j\}$.
\end{center}

In this construction, a min size solution to $\MinRep$ is equivalent to a minimum size solution to \pairS. In other words, if there exists an $\alpha$-approximation algorithm for \pairS, then we would immediately get an $\alpha$-approximation algorithm for \MinRep.  Kortsarz~\cite{kortsarz2001hardness} showed, however, that there is a hardness of $2^{ \log^{1-\epsilon} n}$, for any $0<\epsilon<1$ for $\MinRep$ unless NP $\subseteq \text{DTIME}(n^{\text{polylog}(n)})$, and hence, for $\pairS$ as well. In terms of positive results, there is a $O(n^{1/3} \log^{2/3}n)$-approximation algorithm for $\MinRep$ due to Charikar et al~\cite{charikar2011improved}. We believe it would be an interesting challenge to try to extend this to $\pairS$.

%
% In this context, we say that the edge $(u,v)$ is \textit{satisfied} or, more specifically, the edge $(u,v)$ satisfies the set $E_i$ if $u$ and $v$ are both chosen in the set and $(u,v) \in E_i$. This is because we can associate with each vertex $v$ a boolean variable $x_v$ and then each set $E_i$ can be described as the clause: $\displaystyle{\bigvee_{(u,v)= e \in E_i} }(x_v \wedge x_u)$.

\end{document}